\documentclass[10pt,journal,comsoc]{IEEEtran}
\usepackage{graphicx} 
\usepackage{graphics}
\usepackage{subfigure}
\usepackage{multirow}
\usepackage{textcomp}
\usepackage{array}
\usepackage{xcolor}
\usepackage[hidelinks]{hyperref}

\usepackage{enumitem}   
\usepackage{siunitx}    
\usepackage{booktabs}   
\usepackage{tabularx} 	
\usepackage{makecell}   
\newcolumntype{Y}{>{\centering\arraybackslash}X} 

\usepackage{tabularray}
\usepackage{eso-pic}

\usepackage{tikz}
\usetikzlibrary{shapes.geometric, arrows, positioning, calc}

\usepackage{amsmath,amssymb,amsfonts}
\usepackage{bm}

\usepackage{algorithm}
\usepackage{algorithmic}
\usepackage[algo2e,linesnumbered,ruled]{algorithm2e}

\usepackage{amsthm}
\theoremstyle{definition}

\newtheorem{property}{Property}

\newtheorem{thm}{Theorem}

\SetKwProg{Fn}{Function}{}{end}

\SetAlgoSkip{}

\newcolumntype{M}[1]{>{\centering\arraybackslash}m{#1}}

\DeclareSymbolFont{symbolsC}{U}{txsyc}{m}{n}
\DeclareMathSymbol{\notniFromTxfonts}{\mathrel}{symbolsC}{61}

\title{Adaptive and Resilient Dual-Layer Resource Slicing for Hovering Aerial Backhaul Networks}

\author{Chuan-Chi~Lai,~\IEEEmembership{Member,~IEEE},
    and Jen-Hsiang~Li
    \IEEEcompsocitemizethanks{        
        \IEEEcompsocthanksitem{This research was supported by the National Science and Technology Council, Taiwan, under Grant Nos. NSTC 114-2221-E-194-062- and NSTC 115-2221-E-194-042-MY2. 
		This work was also partially supported by the Advanced Institute of Manufacturing with High-tech Innovations (AIM-HI) from the Featured Areas Research Center Program within the framework of the Higher Education Sprout Project by the Ministry of Education (MOE) in Taiwan. \emph{(Corresponding author: Chuan-Chi~Lai.)}}
        \IEEEcompsocthanksitem{Chuan-Chi~Lai is with the Department of Communications Engineering, National Chung Cheng University, Minxiong Township, Chiayi County 621301, Taiwan, and also with the Advanced Institute of Manufacturing with High-tech Innovations (AIM-HI), National Chung Cheng University, Minxiong Township, Chiayi County 621301, Taiwan (e-mail: chuanclai@ccu.edu.tw).}
		\IEEEcompsocthanksitem{Jen-Hsiang~Li is with the Department of Information Engineering and Computer Science, Feng Chia University, Taichung 407102, Taiwan.} 
    }
}

\IEEEtitleabstractindextext{%
\begin{abstract}
This paper investigates adaptive and resilient dual-layer resource slicing in hovering aerial agent (HAA)-assisted backhaul networks for heterogeneous 5G/6G services, including enhanced mobile broadband (eMBB), ultra-reliable and low-latency communications (URLLC), and massive machine-type communications (mMTC). To address the complex coupling of this dual-layer architecture in non-stationary environments, we propose the resilient adaptive priority orchestration enhanced twin delayed deep deterministic policy gradient (RAPO-TD3) framework. We introduce a novel double soft-max projection mechanism to map the continuous action space into physically feasible bandwidth distributions, ensuring strict constraint adherence. Additionally, a resilient adaptive priority orchestration (RAPO) mechanism is embedded to safeguard mission-critical URLLC latency. Crucially, we establish a rigorous mathematical foundation proving that our framework ensures Lipschitz continuity and satisfies the Robbins-Monro conditions for stable asymptotic convergence. Extensive simulations under non-stationary traffic demonstrate that our RAPO-TD3 framework achieves superior performance relative to PPO, DDPG, and traditional solvers. Notably, via the RAPO mechanism, our approach maintains URLLC satisfaction levels closely approaching theoretical optima even during 500\% demand surges. Furthermore, scalability evaluations indicate that sub-millisecond execution latencies strictly satisfy the 1 ms URLLC budget, demonstrating the performance efficacy of our proposed framework.
\end{abstract}

\begin{IEEEkeywords}
HAA-assisted backhaul, network slicing, RAPO-TD3, URLLC, priority optimization
\end{IEEEkeywords}}

\begin{document}

\AddToShipoutPictureBG*{
    \AtPageUpperLeft{%
        \raisebox{-1.3cm}[\height][\depth]{
            \makebox[\paperwidth][c]{
                \parbox{.9\textwidth}{
                    \centering
                    \footnotesize \copyright~2026 IEEE. Personal use of this material is permitted. However, permission to use this material for any other purposes must be obtained from the IEEE by sending a request to pubs-permissions@ieee.org.
                }%
            }%
        }%
    }%
}

\maketitle
\IEEEdisplaynontitleabstractindextext

%
\IEEEpeerreviewmaketitle

\section{Introduction}
\label{sec:intro}

\IEEEPARstart{T}{he} evolution of sixth-generation (6G) wireless networks demands global coverage, intelligent sensing, and programmable networking~\cite{10551400,8685766,10473184}. To realize this vision, 6G architectures must support heterogeneous services with conflicting requirements~\cite{10742112,10354514,10003191}. These include \textit{enhanced mobile broadband} (eMBB) for high-data-rate applications, \textit{ultra-reliable and low-latency communications} (URLLC) for mission-critical millisecond-level constraints, and \textit{massive machine-type communications} (mMTC) for hyper-dense Internet of Things connectivity~\cite{9729992,10537066}. However, terrestrial infrastructure deployment is frequently impeded by geographical barriers and protracted construction cycles, limiting capacity expansion during temporary hotspots or post-disaster recovery~\cite{10104142}. Similarly, traditional backhaul solutions lack the reconfigurability to adapt to rapid spatial-temporal traffic shifts. Consequently, \textit{hovering aerial agent} (HAA)-assisted backhaul networks have emerged as a promising architecture~\cite{10323247,s24061888}. Leveraging rapid deployment capabilities, HAA fleets form an agile three-dimensional topology, acting as both backhaul and access nodes to extend coverage and guarantee \textit{quality of service} (QoS) in dynamic environments~\cite{10186380,YU2023109533}.

Despite these advantages, HAA systems are inherently constrained by limited radio frequency bandwidth and finite battery capacity~\cite{10716774,9999295}. The primary challenge is providing differentiated eMBB, URLLC, and mMTC services within this restricted airspace. Coordinating these resources while maintaining strict isolation and service-specific satisfaction remains a critical optimization problem in 6G architectures.

Existing backhaul management predominantly relies on offline planning and static slicing, pre-partitioning spectral resources into fixed proportions~\cite{10186380,9459763}. Although effective under quasi-static loads, these approaches lack the agility to handle non-stationary traffic surges during large-scale events or emergency missions. Aligning with the 3GPP management and orchestration framework~\cite{3gpp_ts_28_533}, transitioning toward autonomous and adaptive slicing paradigms is essential to handle highly volatile demand. Without this adaptability, unmanaged eMBB spikes may starve mMTC connections or push URLLC latencies beyond survival thresholds, causing critical failures. 
To protect mission-critical links under resource scarcity, traditional frameworks often utilize hard preemption strategies that abruptly terminate lower-priority connections, inducing severe packet dropouts and system oscillations. Ultimately, these mechanisms lack the real-time granularity and structural differentiability needed to gracefully balance diverse QoS metrics within volatile aerial environments.

\textit{Deep reinforcement learning} (DRL) provides a potent framework for adaptive slicing through its intrinsic online interaction and autonomous policy tuning capabilities~\cite{8103164,10302364,10750921}. Nevertheless, conventional DRL models, such as \textit{deep deterministic policy gradient} (DDPG), struggle with high-dimensional continuous action spaces where core-layer slice ratios and HAA-layer sub-slice allocations are hierarchically coupled and jointly constrained~\cite{10026882}. 
During exploration, standard continuous-control algorithms frequently generate unconstrained outputs violating physical bandwidth simplex boundaries, which compromises resource feasibility and destabilizes policies. Furthermore, hard-threshold reward functions often cause sparse gradients and convergence instability. To address these limitations, this study proposes the \textit{resilient adaptive priority orchestration enhanced twin delayed deep deterministic policy gradient} (RAPO-TD3) framework. Building upon the baseline TD3 algorithm (which employs twin-Q networks to suppress overestimation bias, delayed actor updates for stability, and target policy smoothing to mitigate oscillations~\cite{8103164,10257231}), our approach integrates a \textit{double soft-max projection} technique with differentiable penalty shaping. This structural design ensures decisions strictly adhere to feasible resource regions while preserving continuous gradients for efficient learning.

This research deploys this intelligent architecture for dual-layer network slicing in HAA-assisted backhaul systems. By incorporating \textit{prioritized experience replay} (PER)~\cite{PER_2016} alongside our newly developed \textit{resilient adaptive priority orchestration} (RAPO) mechanism, the proposed solution enhances operational resilience against stochastic traffic bursts without brittle, non-differentiable hard preemption strategies.

\begin{itemize}
    \item \textit{Adaptive Dual-Layer Architecture and Differentiable Projection}: We formulate a hierarchical slicing model for aerial backhaul to maximize heterogeneous service satisfaction. A novel double soft-max projection transforms the unconstrained action space into feasible bandwidth allocations, guaranteeing simplex constraint adherence during training and execution without heuristic boundary rules.
    \item \textit{Resilient Adaptive Priority Orchestration Mechanism}: The core RAPO mechanism is embedded within a differentiable penalty-driven reward function utilizing adaptive smooth gating. This provides dense gradient propagation, enabling the agent to dynamically shift resource focus and resiliently safeguard URLLC reliability under extreme workloads.    
    \item \textit{Theoretical Convergence and Stability Guarantees}: We establish a rigorous mathematical foundation for algorithmic stability. We formally prove that the continuously differentiable design ensures action space boundedness and Lipschitz continuity, satisfying the Robbins-Monro conditions~\cite{Robbins1951} to mathematically guarantee asymptotic convergence to a stable local optimum without divergent oscillations.
    \item \textit{Performance and Scalability Verification}: Extensive simulations under non-stationary traffic demonstrate that RAPO-TD3 achieves superior performance relative to DDPG and PPO baselines. By integrating the proposed Double Soft-max Projection mechanism, the RAPO-TD3 framework enables URLLC satisfaction levels closely approaching theoretical optima during 500\% traffic surges, restricting steady-state constraint violation penalties to approximately 0.08 compared with 3.3 in unconstrained baselines, while strictly maintaining sub-millisecond execution latencies.
\end{itemize}

\section{Related Work} 
\label{sec:related_work} 

Resource management in aerial-assisted networks has evolved significantly with the integration of network slicing and advanced machine learning techniques. This section reviews relevant literature across three key domains: network slicing paradigms, aerial deployment optimization, and deep reinforcement learning frameworks.

\subsection{Network Slicing Paradigms} 

Network slicing provides the essential flexibility and scalability required to adapt to evolving service demands, facilitating rapid application deployment and efficient service scaling. However, the coordinated resource allocation across heterogeneous slices, encompassing eMBB for high throughput, URLLC for low latency, and mMTC for massive connectivity, remains a formidable challenge.

Extensive research has investigated resource management strategies tailored for these diverse slices. Early studies in~\cite{Xiao2018} formulated the resource slicing problem over licensed and unlicensed bands as a mathematical optimization model to maximize the aggregate utility of heterogeneous services.
To manage the coexistence of eMBB and URLLC services, the integration of slicing with physical layer technologies, such as \textit{non-orthogonal multiple access} (NOMA), was demonstrated in~\cite{9448942} as an essential approach to balancing high throughput with stringent latency constraints. Beyond fundamental allocation, network slicing has been conceptualized as a critical backbone for IoT connectivity in smart cities, with automated scheduling schemes proposed in~\cite{9232929} to satisfy diverse sensing task requirements.

Current research predominantly focuses on \textit{end-to-end} (E2E) resource coordination and service layer orchestration. For instance, cloud-native container technology has been explored to establish flexible slicing service chains between softwarized HAA nodes and terrestrial core networks~\cite{10186380}. In the context of aerial access and backhaul, slice-aware joint optimization of deployment and spectrum allocation has been shown to significantly reduce resource consumption, particularly when aerial nodes are limited~\cite{9975290}. Recent advancements have further extended slicing paradigms to emerging fields, such as \textit{integrated sensing and communication} (ISAC)~\cite{10473184} and large-scale IoT frameworks for 6G intelligent infrastructures~\cite{10551400}. Collectively, achieving fine-grained bandwidth and power allocation while ensuring QoS and security isolation remains a pivotal issue in the evolution of slicing technologies.

\subsection{HAA Deployment and Backhaul Resource Management} 

The proliferation of HAA technology has facilitated the development of three-dimensional network topologies that integrate aerial access with backhaul connectivity. While HAAs enable rapid deployment for emergency response or temporary events, they are governed by stringent physical constraints, including finite bandwidth, limited power, and battery endurance. Additionally, backhaul links in these environments are highly susceptible to environmental fluctuations and line-of-sight obstructions.

Initial research in this domain focused primarily on front-end access optimization. For instance, stochastic geometry analysis has been employed to reveal the sensitivity of uplink interference to mMTC success rates, emphasizing the critical balance between deployment density and power control~\cite{10104142}. Surveys on the intersection of 6G and artificial intelligence have further positioned HAAs as portable edge inference nodes, highlighting their utility for latency-sensitive services within the \textit{Internet of Vehicles} (IoV) and smart city infrastructures~\cite{s24061888}. Regarding the coordination of multiple aerial nodes, an adaptive and fair deployment approach was proposed in~\cite{9946428} to balance offload traffic in multi-HAA cellular networks, ensuring equitable resource distribution among heterogeneous ground users.

Recent trends have shifted toward the joint design of access and backhaul integration. A dual-scale approach was proposed in~\cite{10186380}, where HAA positions and backhaul capacities are optimized offline, while slice resources are dynamically allocated online using DRL to satisfy service-level requirements. In the realm of \textit{integrated access and backhaul} (IAB), cooperative parallel resource allocation mechanisms, such as CPReal, have been introduced to manage both bursty and non-bursty traffic profiles~\cite{YU2023109533}. Furthermore, issues regarding fairness in multi-operator architectures~\cite{10716774} and real-time trade-offs between eMBB and URLLC traffic in multi-hop relay networks~\cite{10537066} have been investigated. Federated learning has also been explored to facilitate decentralized slicing decisions under multi-operator sharing and open \textit{radio access network} (Open RAN) frameworks~\cite{9999295}.

\subsection{DRL-based Resource Management} 

Traditional optimization and heuristic algorithms often struggle with the high-dimensional and non-stationary environments of 5G/6G networks. Consequently, DRL has been widely adopted for wireless resource management due to its ability to learn optimal policies through trial-and-error interactions. Integrating \textit{deep neural networks} (DNNs) with reinforcement learning has enabled systems to transcend the dimensionality constraints inherent in traditional methods~\cite{8103164}, with the trade-off between exploration and exploitation serving as a critical factor for convergence quality~\cite{LADOSZ20221}.

In network slicing, predictive-assisted DRL has demonstrated potential by achieving significant latency reductions through real-time optimization of power and user access~\cite{10302364}. For heterogeneous networks with \textit{mobile edge computing} (MEC), \textit{multi-agent} TD3 (MATD3) algorithms have been successfully implemented to maximize spectral efficiency~\cite{10750921}. To enhance service orchestration, federated DRL frameworks have been proposed to coordinate multiple xAPPs in Open RAN~\cite{s22083031}, while dual-granularity RAN slicing strategies have been developed to concurrently maximize QoS and resource utilization across the entire slice lifecycle~\cite{9459763}.

Advanced architectures, such as hierarchical DRL, have been introduced to coordinate control between the RAN and the core layer, effectively reducing backhaul signaling overhead~\cite{10026882}. In HAA-assisted scenarios, the synergy between DRL and energy prediction models has proven effective for joint trajectory and bandwidth scheduling~\cite{10186380}. 
Recent literature in 2026 has further emphasized the necessity of handling complex service coexistence and dynamic adaptations. For instance, advanced soft actor-critic architectures have been proposed to manage the stringent coexistence of URLLC traffic with distributed learning services through intelligent device selection~\cite{Ganjalizadeh2026}. Similarly, enhanced TD3 algorithms featuring dynamic normalization and adaptive weights have been successfully applied to balance delay and energy consumption in aerial collaborative computing~\cite{Song2026}. These latest developments lay the foundation for the dual-layer continuous control framework proposed in this study.

\subsection{Summary of Research Gaps and Motivation} 

A comprehensive comparison of the proposed framework with existing literature is summarized in Table~\ref{tab:literature_comparison}. Despite the significant advancements reviewed in the previous subsections, including the very recent progress in DRL-driven service coexistence~\cite{Ganjalizadeh2026} and adaptive TD3 offloading~\cite{Song2026}, several critical research gaps remain unaddressed. 
As categorized in Table~\ref{tab:literature_comparison} under the Slicing Hierarchy column, most existing literature treats HAAs merely as extensions of terrestrial base stations, predominantly focusing on single-layer resource division between the \textit{base station} (BS) and HAAs. Such simplified models often overlook the two-tier coupling mechanism between core-layer slice proportions and HAA-layer sub-slice allocations, failing to capture the complex interdependencies within a multi-layer backhaul network.

\begin{table*}[!t]
    \footnotesize
    \caption{Comparison of Related Literature}
    \label{tab:literature_comparison}
    \centering
    \begin{tblr}{
        width=\textwidth,
        colspec={Q[25mm]|X|X|X|X},
        row{1}   = {font=\bfseries,rowsep=2pt},  
        row{2-Z} = {rowsep=0pt},                 
        rows     = {valign=m},                   
        hline{1,2,Z} = {-}{},
    }
    Paper / Perspective & Slicing Hierarchy & Resource Type & QoS Metrics & Algorithm \\
    \cite{Xiao2018} (2018) & Single-layer (Core) & Bandwidth (Licensed / Unlicensed) & Aggregate Utility & Game Theory / Distributed Algo. \\
    \cite{YU2023109533} (2023) & IAB Integrated & Bandwidth, Backhaul Slot & URLLC Latency & MATD3 \\
    \cite{10257231} (2023) & Single-layer (HAA) & Bandwidth, HAA Position & SLA Satisfaction (Rate) & DQN \\
    \cite{10323247} (2024) & Hybrid Air-Ground & Bandwidth (Access + Backhaul) & eMBB Throughput and Coverage & Genetic-based Deployment \\
    \cite{Ganjalizadeh2026} (2026) & Single-layer (Edge) & Device Selection, Bandwidth & URLLC Latency, Convergence Time & BSAC \\
    \cite{Song2026} (2026) & Three-layer (Collaborative) & Computing, Energy & Delay, Energy Consumption & DN-TD3 \\
    Proposed Scheme & Dual-layer: Core + HAA Sub-slice & Bandwidth (Access + Backhaul) & Rate, Latency, and Success Rate & RAPO-TD3  
    \end{tblr}
\end{table*}

To bridge these gaps, this paper proposes a dual-layer framework that first performs slicing at the core layer and subsequently subdivides these resources at the HAA layer for mMTC, URLLC, and eMBB services. Our work distinguishes itself from existing studies in the following key aspects:
\begin{itemize}
    \item \textit{Algorithmic Robustness}: We combine the stability of the TD3 algorithm with the sample efficiency of PER. This combination is specifically tailored to handle the stringent latency constraints of URLLC in dynamic and non-stationary HAA backhaul contexts.
    \item \textit{Constraint-Aware Design}: Unlike existing DRL-based slicing literature that often bypasses or simplifies action-space constraints, we introduce a double soft-max projection mechanism. This provides a theoretically grounded solution to the action-space constraint problem, ensuring that resource allocations remain physically feasible while maintaining continuous gradients for efficient learning.
    \item \textit{Comprehensive QoS Support}: In contrast to works focused on a single performance metric, our framework simultaneously optimizes rates, latency, and success rates for heterogeneous services. This multi-objective optimization is reflected in the QoS Metrics column of Table~\ref{tab:literature_comparison}, demonstrating a more holistic approach to 6G service requirements.
    \item \textit{Adaptive Priority Execution}: To overcome the limitations of offline static slicing, we introduce the RAPO mechanism. This provides an adaptive framework to gracefully scale down best-effort services and protect mission-critical links under severe resource scarcity, an operational flexibility largely absent in prior studies.
\end{itemize}

By addressing the two-tier coupling and the action-space constraint simultaneously, the proposed RAPO-TD3 framework armed with RAPO offers a robust, continuous-control solution for the next generation of aerial backhaul networks.

\section{System Model and Problem Formulation}
\label{sec:system_model_problem_formulation}

As depicted in Fig.~\ref{fig:system_model}, the studied system operates under a stratified dual-layer slicing framework consisting of a centralized \textit{macro slicing orchestrator} (MSO) and an aerial radio access network. The MSO partitions the total bandwidth among HAAs on a macroscopic time scale based on aggregate demands, while each HAA subsequently distributes its allocated share among three local service slices on a sub-millisecond execution scale. This hierarchical structure mathematically decouples the spatial-temporal optimization scales, enabling fine-grained, real-time adaptation to non-stationary traffic while mitigating the state-action dimensionality explosion inherent in flat centralized frameworks.

\subsection{Network Architecture}
We consider a 6G-oriented HAA-assisted downlink backhaul network comprising a macro base station hosting the MSO and a set of $N$ HAAs indexed by $n = 1, 2, \dots, N$. To support diverse QoS requirements, the network is partitioned into three primary network slices: eMBB, URLLC, and mMTC. 
To mitigate terrestrial non-line-of-sight propagation blockages from urban terrain obstructions, HAAs are deployed at strategic altitudes to maintain clear line-of-sight links with the macro base station. This architecture adopts a \textit{hierarchical control paradigm}, operating the macro base station strictly as a centralized macro orchestrator to avoid high-frequency signal penetration losses and minimize core network computational overhead. Furthermore, by employing infrastructure-grade tethered HAAs, the system not only secures continuous power supply through physical tethers, thereby eliminating transient battery depletion limitations, but also gains high structural stability against aerial wind disturbances. This robust physical anchoring guarantees stable line-of-sight backhaul links, justifying their deployment as reliable quasi-stationary relays.

During backhaul service delivery, the HAAs maintain a quasi-stationary profile, treating their spatial coordinates as fixed boundary parameters. Building on this stable operational assumption, aerodynamic flight paths, three-dimensional trajectory optimization, and altitude adjustments are decoupled from the resource allocation problem. Furthermore, propulsion power consumption and flight energy limitations are likewise isolated from the operational action space. This structural configuration ensures that the dual-layer slicing framework can focus entirely on multi-service queueing dynamics and spectrum efficiency during the localized execution window.

\begin{figure}[!t]
	\centering
	\includegraphics[width=\linewidth]{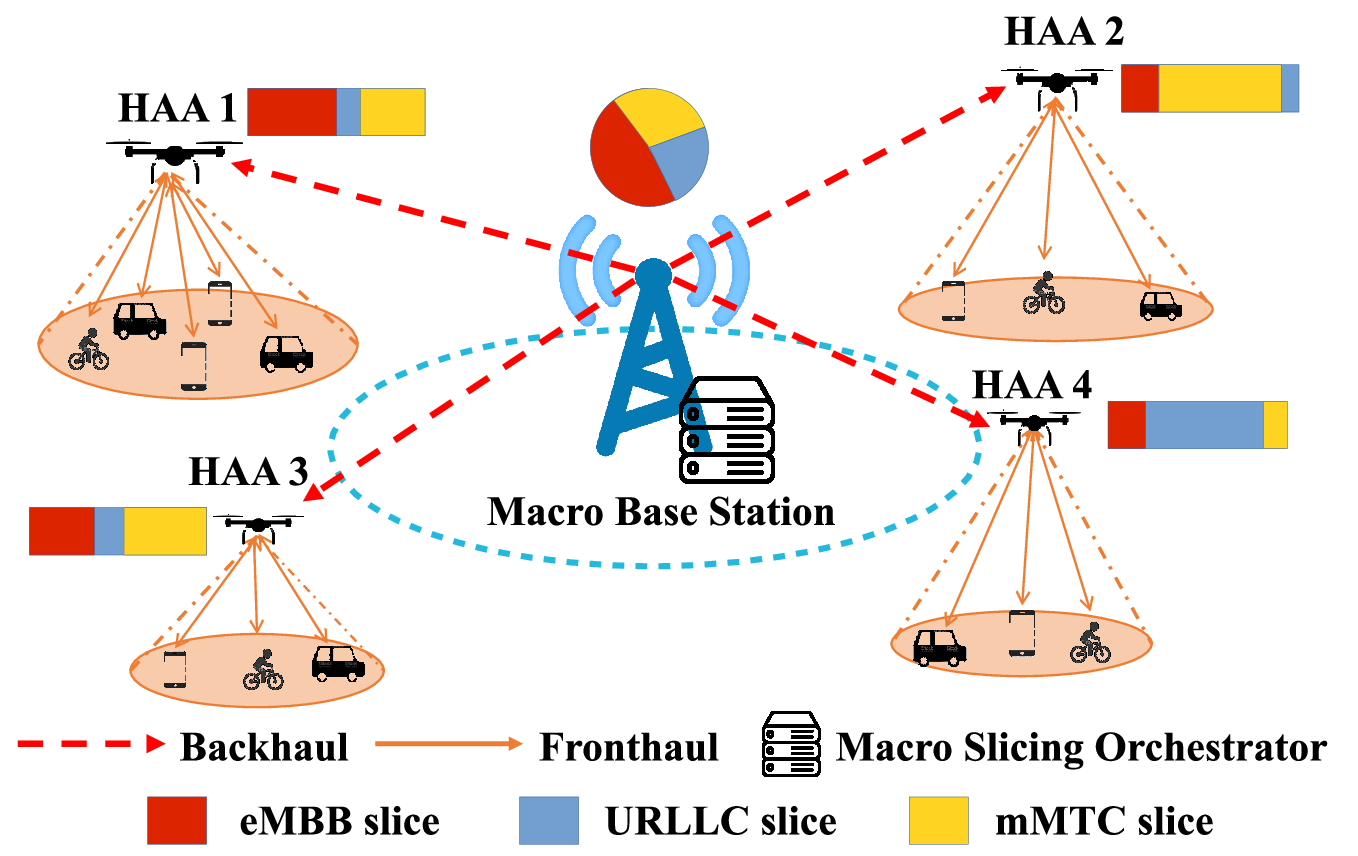} 
	\caption{HAA-assisted backhaul network with dual-layer slicing architecture.}
	\label{fig:system_model}
\end{figure}

\subsection{Heterogeneous Air-to-Ground Channel Model with Spatial Abstraction}
To characterize the spatial heterogeneity of the aerial transport stratum, we consider a network layout where $N$ HAAs are deployed at a constant operational altitude $H_{\text{haa}}$. The horizontal distance between the centralized macro base station and the $n$-th HAA is denoted by $r_n$, yielding a deterministic three-dimensional propagation distance of $d_n = \sqrt{H_{\text{haa}}^2 + r_n^2}$. Following the standard specifications for dense urban environments~\cite{al2014optimal}, the \textit{line-of-sight} (LoS) probability between the centralized orchestrator and the $n$-th HAA is dynamically governed by the spatial elevation angle $\varphi_n = \arctan(H_{\text{haa}} / r_n) \times \frac{180}{\pi}$, expressed as:
\begin{equation}
P_{\text{LoS}}(\varphi_n) = \frac{1}{1 + a_{\text{env}} \exp \left[ -b_{\text{env}} \left( \varphi_n - a_{\text{env}} \right) \right]},
\end{equation}
where $a_{\text{env}}$ and $b_{\text{env}}$ represent environment-specific constants that dictate the steepness and the offset of the propagation probability curve corresponding to the local urban building densities~\cite{al2014optimal}. Consequently, the macro-scale geometric path loss $\overline{PL}_n$ (in linear scale) is formulated as the expectation over LoS and \textit{non-line-of-sight} (NLoS) states:
\begin{equation}
\overline{PL}_n = P_{\text{LoS}}(\varphi_n) 10^{\frac{PL_{\text{LoS}}(d_n)}{10}} + \left[ 1 - P_{\text{LoS}}(\varphi_n) \right] 10^{\frac{PL_{\text{NLoS}}(d_n)}{10}},
\end{equation}
where the components $PL_{\text{LoS}}(d_n)$ and $PL_{\text{NLoS}}(d_n)$ incorporate free-space path loss alongside shadowing mid-scale attenuations at a designated carrier frequency $f_c$:
\begin{align}
PL_{\text{LoS}}(d_n) &= 20 \log_{10}(d_n) + 20 \log_{10}(f_c) + C_{\text{FSPL}} + \eta_{\text{LoS}}, \\
PL_{\text{NLoS}}(d_n) &= 20 \log_{10}(d_n) + 20 \log_{10}(f_c) + C_{\text{FSPL}} + \eta_{\text{NLoS}}
\end{align}
with $C_{\text{FSPL}}$ denoting the free-space reference constant derived from the speed of light. The parameters $\eta_{\text{LoS}}$ and $\eta_{\text{NLoS}}$ signify the average additional attenuation factors for LoS and NLoS links, respectively, which explicitly capture the structural penetration and obstacle reflection losses inherent to the 3D urban geometry~\cite{al2014optimal}.

To guarantee compliance with reinforcement learning convergence boundaries without undermining physical rigor, the instantaneous channel gain $h_n(t)$ at a discrete time slot $t$ governing the backhaul transport capability is modeled as a composite framework merging deterministic path loss and fast-varying small-scale fading:
\begin{equation}
h_n(t) = \Omega_n(t) \cdot \zeta_n,
\end{equation}
where $t$ indexes the discrete operational intervals, $\Omega_n(t)$ models the instantaneous small-scale multi-path fading profile governed by a Rayleigh distribution~\cite{khawaja2019survey}, and $\zeta_n$ represents the normalized large-scale spatial structural parameter mapping the path-loss heterogeneity onto the unified agent observation space:
\begin{equation}
\zeta_n = \frac{\overline{PL}_n^{-1/2}}{\frac{1}{N}\sum_{j=1}^{N} \overline{PL}_j^{-1/2}}.
\end{equation}

\subsection{Fronthaul Abstraction and Demand Modeling}
\label{sec:fronthaul_abstraction}
A major architectural feature of the proposed hierarchical management plane is the functional decoupling of lower-tier edge fronthaul access from upper-tier downlink backhaul transport provisioning. The system operates over a discrete-time horizon indexed by $t=1, \dots, T$, where each slot corresponds to a centralized macro control period $T_C$. At the beginning of each interval $t$, the MSO collects state information, executes global resource slicing orchestration, and disseminates decisions for localized enforcement. Crucially, the index $t$ denotes this macro-scale orchestration interval; once the sub-slice ratios $\mathbf{g}_i(t)$ are determined at the start of slot $t$, each HAA executes these assigned bandwidth proportions continuously across localized, sub-millisecond physical resource block scheduling loops throughout the entire duration of $T_C$. This structure strictly isolates the global optimization step from high-frequency fronthaul execution, which mathematically resolves time-scale separation without conflating the indices. 

We assume that microscopic user equipment association, intra-cell proportional fair scheduling, and physical-layer radio resource block tiling are executed autonomously at the network edge by decentralized base station components embedded within each HAA. From the operational perspective of the centralized MSO, these dynamic edge radio dynamics are systematically compressed and filtered. The decentralized edge mechanisms map the instantaneous local user traffic states into an aggregated, multi-service exogenous downlink traffic demand profile vector for each HAA $n$ at control period $t$, denoted as:
\begin{equation}
\boldsymbol{\Delta}_n(t) = \left[ \Delta_{n,e}(t), \Delta_{n,u}(t), \Delta_{n,m}(t) \right]^\top,
\end{equation}
where $\Delta_{n,e}(t)$, $\Delta_{n,u}(t)$, and $\Delta_{n,m}(t)$ represent the aggregated downlink traffic demands for eMBB, URLLC, and mMTC, respectively. Simultaneously, each HAA reports its instantaneous backhaul channel measurement $|h_n(t)|$ along with this aggregated demand vector $\boldsymbol{\Delta}_n(t)$ back to the MSO via the dedicated backhaul control channel.

To quantify the traffic pressure experienced by each HAA within every control period $T_C$ across the three heterogeneous service types, the downlink traffic demand is mathematically characterized by the total generated data volume expressed in bits. Specifically, the demand observed at the $n$-th HAA for service type $i \in \{e,u,m\}$ is parameterized as the summation of downlink user-plane packet arrivals routed from the core network:
\begin{equation}\label{eq:delta_nii}
\Delta_{n,i}(t) = \sum_{k \in \mathcal{K}_{n,i}} V_{k,i}(t),
\end{equation}
where $\mathcal{K}_{n,i}$ denotes the set of active user devices belonging to service type $i$ associated with HAA $n$, and $V_{k,i}(t)$ represents the discrete data volume generated for user $k$ during the current control interval. 
To guarantee compliance with physical causality, the traffic demand vector $\boldsymbol{\Delta}_n(t)$ observed by the centralized orchestrator at the exact start of control period $t$ represents the accumulated traffic volume that has arrived from the physical core network and populated the transmission buffers during the immediately preceding interval. Therefore, the centralized orchestrator operates strictly on materialized, deterministic queue states awaiting backhaul transport transmission rather than predicting future stochastic traffic bursts within slot $t$, ensuring complete mathematical causality in real-time execution.

To eliminate observation bias caused by structural spatial differences in the user population densities across distinct HAAs, the framework normalizes the absolute bit load into a dimensionless demand ratio:
\begin{equation}
\delta_{n,i}(t) = \frac{\Delta_{n,i}(t)}{\sum_{j \in \{e,u,m\}} \Delta_{n,j}(t)}, \quad i \in \{e,u,m\}.
\end{equation}
Here, $\delta_{n,i}(t)$ explicitly defines the fractional load contribution of service-type-$i$ demand observed at HAA $n$, satisfying the probability simplex constraint $\sum_{i \in \{e,u,m\}} \delta_{n,i}(t) = 1$. The complete vectorized demand state mapping is thus defined as $\boldsymbol{\delta}_n(t) = [\delta_{n,e}(t), \delta_{n,u}(t), \delta_{n,m}(t)]^\top$.

This aggregated vector follows non-stationary distribution boundaries parameterized by macro diurnal tidal trends and stochastic Poisson burst multipliers. By capturing the time-varying multi-service load at each HAA through this macro abstraction, the system accurately feeds the DRL controller. Isolating the centralized state-space from high-frequency fronthaul user mutations successfully bypasses the dimensionality curse, ensuring that the global core network management loop remains scalable, stable, and mathematically tractable.

\subsection{Dual-Layer Slice Resource Allocation Model}

As shown in Fig.~\ref{fig:dual_slicing_layer}, let $S_i(t)$ denote the macro-layer slice ratio allocated to service type $i \in \{e,u,m\}$ at control period $t$. The vector $\mathbf{S}(t) = [S_e(t), S_u(t), S_m(t)]$ must satisfy the following constraints:
\begin{align}
	& S_i(t) \geq 0, \quad \forall i \in \{e,u,m\}, \label{eq:core_slice_nonnegative}\\
	& \sum_{i \in \{e,u,m\}} S_i(t) = 1. \label{eq:core_slice_sum}
\end{align}
Here, constraint~\eqref{eq:core_slice_nonnegative} ensures that each slice ratio is non-negative, while constraint~\eqref{eq:core_slice_sum} guarantees that the total allocated slice ratios sum to unity, reflecting the complete partitioning of available resources at the macro layer.

\begin{figure}[!t]
	\centering
	\includegraphics[width=.3\textwidth]{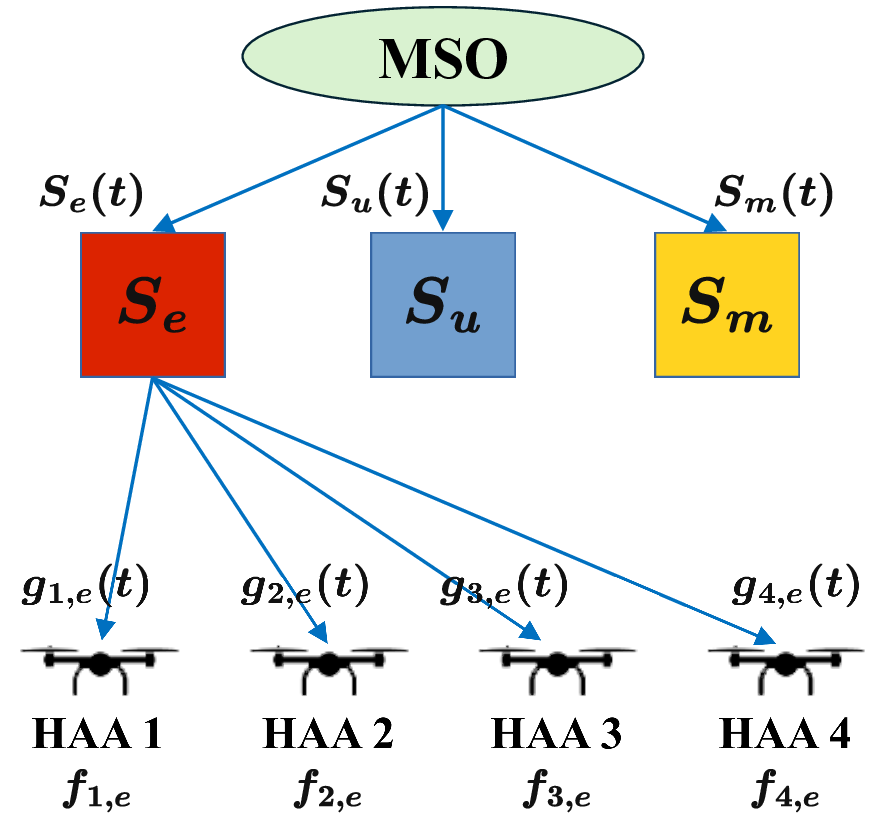} 
	\caption{The proposed dual-layer resource slicing architecture orchestrated by the macro slice orchestrator (MSO). The core-layer allocation partitions the total wireless backhaul bandwidth into service-specific macro-slices ($S_e(t)$, $S_u(t)$, and $S_m(t)$). Subsequently, the HAA-layer sub-slice allocation dynamically distributes these isolated resources across spatially deployed HAAs (e.g., $g_{n,e}(t)$ for the eMBB slice) to serve localized traffic demands.}
	\label{fig:dual_slicing_layer}
\end{figure}

At the HAA layer, each HAA $n$ further divides its allocated macro-layer slice $S_i(t)$ among its locally served users of service type $i$. Let $g_{n,i}(t)$ denote the HAA-layer sub-slice ratio for service type $i$ at HAA $n$ during control period $t$. For each service type $i \in \{e,u,m\}$, the vector $\mathbf{g}_i(t) = [g_{1,i}(t), g_{2,i}(t), \dots, g_{N,i}(t)]$ must satisfy:
\begin{align}
	& g_{n,i}(t) \geq 0, \quad \forall n = 1, 2, \dots, N, \label{eq:uav_subslice_nonnegative}\\
	& \sum_{n=1}^{N} g_{n,i}(t) = 1, \quad \forall i \in \{e,u,m\}. \label{eq:uav_subslice_sum}
\end{align}
Here, constraint~\eqref{eq:uav_subslice_nonnegative} ensures that each HAA's sub-slice ratio is non-negative, while constraint~\eqref{eq:uav_subslice_sum} guarantees that the total sub-slice ratios across all HAAs for each service type sum to unity, reflecting the complete distribution of macro-layer slices at the aerial layer.

With~\eqref{eq:core_slice_sum} and~\eqref{eq:uav_subslice_sum}, the effective resource proportion allocated to service type $i$ at HAA $n$ during control period $t$ can be expressed as:
\begin{align}
	f_{n,i}(t) = S_i(t) \cdot g_{n,i}(t), \quad \forall i \in \{e,u,m\}, \label{eq:effective_resource_proportion}
\end{align}
where $f_{n,i}(t)\in [0,1]$. Suppose the total available bandwidth is $B_{\rm total}$. Then, the bandwidth allocated to service type $i$ at HAA $n$ during control period $t$ is given by:
\begin{align}
	B_{n,i}(t) = f_{n,i}(t) \cdot B_{\rm total}, \quad \forall i \in \{e,u,m\}.  \label{eq:bandwidth_alloc}
\end{align}

\subsection{QoS Satisfaction Model}
Let $N$ denote the number of HAAs and $t$ denote the index of the control time slot. For HAA $n$ ($n=1,2,\dots,N$), the eMBB throughput $R_{n,e}(t)$ during time slot $t$ is calculated using the Shannon capacity theorem as:
\begin{equation}
    R_{n,e}(t) = B_{n,e}(t)\,\log_{2}\!\Bigl(1+\frac{P_{n}\,|h_{n}(t)|^{2}}{N_{0}\,B_{n,e}(t)}\Bigr), \label{eq:qos_re}
\end{equation}
where $B_{n,e}(t)$ is the bandwidth allocated to eMBB at HAA $n$, $P_{n}$ denotes the constant directional downlink transmit power allocated by the macro base station to the backhaul link of HAA $n$, $|h_{n}(t)|^{2}$ is the instantaneous channel gain, and $N_{0}$ is the noise power spectral density.

To accurately capture the transmission dynamics of short packets in mission-critical scenarios, the URLLC transmission rate $R_{n,u}(t)$ is modeled via the \textit{finite-blocklength approximation}~\cite{Durisi2016} as:
\begin{equation}
\begin{split}
    R_{n,u}(t) &= B_{n,u}(t) \Bigg[ \log_2\left(1 + \Gamma_{n,u}(t)\right) \\
    &\quad - \sqrt{\frac{V(\Gamma_{n,u}(t))}{L_{n,u}}} \frac{Q^{-1}(\epsilon_{u})}{\ln 2} \Bigg],
\end{split}
\label{eq:qos_ru}
\end{equation}
where $\Gamma_{n,u}(t) = \frac{P_n |h_n(t)|^2}{N_0 B_{n,u}(t)}$ represents the instantaneous \textit{signal-to-noise ratio} (SNR), $V(\Gamma_{n,u}(t)) = 1 - (1+\Gamma_{n,u}(t))^{-2}$ denotes the channel dispersion, $L_{n,u}$ is the URLLC packet blocklength, $Q^{-1}(\cdot)$ signifies the inverse Q-function, and $\epsilon_{u}$ bounds the decoding error probability. To establish the cross-layer mapping with the user-plane traffic volume $V_{k,u}(t)$ introduced in~\eqref{eq:delta_nii}, the parameter $L_{n,u}$ represents the standardized physical transport-block size into which this arriving data volume $V_{k,u}(t)$ is segmented for backhaul transmission. Unlike the volatile, time-varying traffic volume $V_{k,u}(t)$, the physical blocklength $L_{n,u}$ is a time-invariant protocol configuration predetermined during network initialization to eliminate dynamic packet-formatting signaling overhead and guarantee deterministic processing delays.

Furthermore, by characterizing the edge buffer of each HAA as an M/M/1 queueing system~\cite{She2018}, the total URLLC latency $D_{n,u}(t)$ is rigorously formulated to represent the complete system sojourn time, which inherently encompasses both the queueing wait time and the physical transmission delay:
\begin{equation}
    D_{n,u}(t) = \frac{1}{\mu_{n,u}(t) - \lambda_{n,u}(t)}, \label{eq:qos_du}
\end{equation}
where $\mu_{n,u}(t) = \frac{R_{n,u}(t)}{L_{n,u}}$ denotes the corresponding service rate, and $\lambda_{n,u}(t) = \frac{N_{\text{pkt},n,u}(t)}{T_C}$ represents the average packet arrival rate during the control period, with $N_{\text{pkt},n,u}(t) = \frac{\Delta_{n,u}(t)}{L_{n,u}}$ signifying the number of standardized physical packets generated at HAA $n$ for URLLC services within time slot $t$.

The mMTC success rate $G_{n,m}(t)$ at HAA $n$ during time slot $t$ is defined as the fraction of successfully received packets within the maximum tolerable delay $T_{\max}^m$ and maximum retransmissions $K_{\max}$:
\begin{equation}
    G_{n,m}(t) = \frac{1}{N_{\text{pkt},n,m}(t)} \sum_{j=1}^{N_{\text{pkt},n,m}(t)}\!\Bigl[T_{j} \le T_{\max}^m\;\land\;\mathrm{RTX}_{j} \le K_{\max}\Bigr], \label{eq:qos_gm}
\end{equation}
where $N_{\text{pkt},n,m}(t) = \frac{\Delta_{n,m}(t)}{L_{n,m}}$ is the total number of physical packets generated at HAA $n$ for mMTC services, $T_{j}$ is the delay of packet $j$, and $\mathrm{RTX}_{j}$ is the number of retransmissions for packet $j$. To contextualize the scale relationship, the packet-level latency threshold satisfies $T_{\max}^m < T_C$, meaning that the microscopic transmission and retransmission processes of all packets generated within slot $t$ are fully completed at the HAA edge plane before the MSO initiates the next macro control period. Structurally, the total delay $T_j$ is analyzed as the summation of the over-the-air transmission delay, propagation delay, and accumulated backoff waiting times across multiple attempts. Furthermore, the retransmission counter $\mathrm{RTX}_j$ is modeled via a standard \textit{automatic repeat request} (ARQ) protocol mechanism. Each decoding failure at the receiving end triggers a retransmission attempt and increments $\mathrm{RTX}_j$ by $1$. If successful decoding is not achieved within $\mathrm{RTX}_j \le K_{\max}$ or if the cumulative delay exceeds $T_j > T_{\max}^m$, the packet is permanently dropped, thereby reducing the aggregated success rate $G_{n,m}(t)$.

To quantify whether each service meets or violates its \textit{service level agreement} (SLA) threshold, we convert the continuous QoS indicators into normalized satisfaction levels in the interval $[0,1]$. For HAA $n$ at control period $t$, the satisfaction functions corresponding to the observed throughput, latency, and success rate are defined as:
\begin{align}
    \label{re_SLA}
    r_{n,e}(t) &= 
    \begin{cases}
        1, & R_{n,e}(t) \ge R_{\min,e},\\
        0, & R_{n,e}(t) < R_{\min,e},
    \end{cases} \\
    \label{ru_SLA}
    r_{n,u}(t) &= 
    \begin{cases}
        1, & D_{n,u}(t) \le D_{\max,u},\\
        0, & D_{n,u}(t) > D_{\max,u},
    \end{cases} \\
    \label{rm_SLA}
    r_{n,m}(t) &= 
    \begin{cases}
        1, & G_{n,m}(t) \ge G_{\min,m},\\
        0, & G_{n,m}(t) < G_{\min,m}.
    \end{cases}
\end{align}
Here, $R_{\min,e}$ denotes the minimum acceptable eMBB rate, $D_{\max,u}$ is the maximum tolerable URLLC latency, and $G_{\min,m}$ is the minimum acceptable packet success rate for mMTC.

Finally, the global time-average satisfaction of each service type is evaluated after resource allocation is completed over $T$ consecutive periods. The satisfaction values are aggregated across all $N$ HAAs and normalized by the total number of observations $NT$, yielding the average performance over the entire observation window expressed as follows:
\begin{equation}
\bar r_{e} = \frac{1}{NT}\sum_{t=1}^{T} \sum_{n=1}^{N} r_{n,e}(t), 
\end{equation}
\begin{equation}
\bar r_{u} = \frac{1}{NT}\sum_{t=1}^{T} \sum_{n=1}^{N} r_{n,u}(t), 
\end{equation}
\begin{equation}
\bar r_{m} = \frac{1}{NT}\sum_{t=1}^{T} \sum_{n=1}^{N} r_{n,m}(t).
\end{equation}

\subsection{Problem Formulation}
In this study, we model the resource allocation between the macro layer and the aerial layer as a two-tier optimization problem. The macro layer determines the allocation proportions of the three service types (eMBB, URLLC, and mMTC) from the overall resource pool, while the HAA layer further refines the resource proportions for each service type according to the macro-layer decision.

Specifically, the macro-layer ratio $S_i(t)$ determines the global portion assigned to each service type, and the HAA-layer ratio $g_{n,i}(t)$ further distributes the corresponding resources among HAAs from \eqref{eq:core_slice_nonnegative} to \eqref{eq:bandwidth_alloc}. This multi-level allocation ensures a reasonable distribution of resources for each service type while accounting for the practical demands and capabilities of HAAs. Based on this, we incorporate the satisfaction functions $r_{n,e}$, $r_{n,u}$, and $r_{n,m}$ defined from \eqref{eq:qos_re} to \eqref{eq:qos_gm}, and examine whether the QoS requirements are satisfied.

This work aims to jointly maximize the overall satisfaction of the three service types while ensuring that all allocation ratios satisfy the physical boundaries. Therefore, the proposed optimization framework is formulated as:
\begin{align}
    \max_{\mathbf{S}(t), \mathbf{g}_i(t)} \quad & \sum_{\forall i\in\{e, u, m\}} \bar r_{i}, \tag{P1} \label{eq:opt_objective}\\
    \text{s.t.} \quad & \eqref{eq:core_slice_nonnegative}, \eqref{eq:core_slice_sum}, \eqref{eq:uav_subslice_nonnegative}, \eqref{eq:uav_subslice_sum}.\notag
\end{align}
where $\mathbf{S}(t)$ is the macro-layer ratio vector and $\mathbf{g}_i(t)$ is the HAA-layer ratio vector. Note that the total backhaul capacity of the centralized macro base station is implicitly bounded by the total available bandwidth $B_{\text{total}}$. Because the dual-layer resource orchestration constraints strictly enforce $\sum_{n=1}^{N} \sum_{i \in \{e,u,m\}} f_{n,i}(t) = 1$, the aggregate throughput of the entire network is mathematically capped by the physical spectrum limit, preventing any backhaul resource violation or unfeasible allocation over the transport stratum.

The objective function in \eqref{eq:opt_objective} maps the heterogeneous service performance indicators from \eqref{re_SLA} to \eqref{rm_SLA} onto a unified evaluation axis to maximize SLA compliance across all HAAs. However, the non-stationary 6G traffic dynamics and strict dual-layer resource coupling render \eqref{eq:opt_objective} a non-linear fractional multi-dimensional knapsack problem, which is strictly NP-hard. Furthermore, the threshold-triggered SLA boundaries introduce severe non-smoothness and discontinuity into the objective space, invalidating standard gradient-based optimization solvers. 

Traditional numerical optimization methods, such as branch-and-bound or iterative convex relaxations, are physically prohibitive in this context. They require perfect deterministic environmental foresight and incur excessive execution latencies that violate the millisecond-level coherence time of URLLC services. Consequently, these traditional solvers are excluded from the scope of this study, as they cannot be deployed for online real-time execution in non-stationary aerial environments.

\section{The Proposed RAPO-TD3 Framework}
\label{sec:proposed_algorithm}
To solve the formulated non-convex MINLP problem \eqref{eq:opt_objective}, we propose an intelligent resource management framework based on the TD3 algorithm~\cite{FujimotoHM18}. This choice is motivated by the capability of TD3 to mitigate policy overestimation bias in continuous action spaces, which is critical for maintaining training stability in high-dimensional HAA environments.

\subsection{DRL Component Definitions}
We define the state space, action space, and reward function to capture the dynamics of the cognitive backhaul network.

\subsubsection{State Space Design}
To provide the TD3 agent with complete network state feedback, we transition from a basic environment snapshot to a comprehensive 61-dimensional state vector $\mathbf{s}_t$. The design philosophy of $\mathbf{s}_t$ is to encapsulate immediate physical-layer dynamics, strategic network memory, SLA boundaries, and temporal regularities. The global state vector is structured as:
\begin{align}
    \mathbf{s}_t = \left\{ \{\mathbf{o}_{n,t}\}_{n=1}^N, \mathbf{S}(t-1), \{\mathbf{g}_n(t-1)\}_{n=1}^N, \boldsymbol{\psi}_t \right\},
\end{align}
where $N$ denotes the number of HAAs. The state vector consists of three functional modules:
\begin{itemize} 
    \item \textit{Multi-Dimensional Environmental Observations}: For each HAA indexed by $n=1, \dots, N$, an 11-dimensional observation vector $\mathbf{o}_{n,t}$ encapsulates localized network dynamics. To ensure scale uniformity, it incorporates the dimensionless multi-service demand ratios $\delta_{n,e}(t)$, $\delta_{n,u}(t)$, and $\delta_{n,m}(t)$ comprising the vectorized demand state $\boldsymbol{\delta}_n(t)$, alongside the composite instantaneous backhaul channel gain $|h_n(t)|$. The remaining seven dimensions encode static service level agreement boundaries to contextualize volatile traffic loads against contract thresholds, preventing feature dominance and gradient saturation. To preserve the fundamental \textit{Markov property} without exponentially expanding the state dimensionality via recurrent networks, the single-step feedback loop serves as a mathematically sufficient statistic to guarantee policy continuity.
    
    \item \textit{Hierarchical Decision Feedback}: To handle interdependencies within the dual-layer architecture, the agent tracks its historical decisions. This 15-dimensional component consists of the preceding macro-layer slice ratios $\mathbf{S}(t-1)$ (3 dimensions) and the individual HAA-layer sub-slice weights $\mathbf{g}_n(t-1)$ for $n=1, \dots, N$ (12 dimensions). Rather than utilizing a longer temporal history window or recurrent networks which would exponentially expand the state dimensionality and violate the fundamental Markov property of the optimization framework, this single-step 1-indexed feedback loop serves as a mathematically sufficient statistic. It provides the agent with necessary strategic memory to ensure policy continuity and smooth incremental adjustments, effectively preventing tracking oscillations during real-time execution.
    
    \item \textit{Cyclic Temporal Context}: To model 24-hour periodic traffic variations, a 2-dimensional cyclic temporal feature $\boldsymbol{\psi}_t = [\sin(2\pi t / 24), \cos(2\pi t / 24)]$ is integrated. This trigonometric encoding enables proactive slice priority adaptation before anticipated peak traffic surges.
\end{itemize}

\subsubsection{Action Space and Two-Tier Projection}
The action vector $\mathbf{a}_t$ represents the multi-stage decision-making of the resource management framework, encompassing both macro-layer provisioning and HAA-layer spatial distribution. The continuous action space features a total dimensionality of $3 + 3N$, expressed as:
\begin{align}
    \mathbf{a}_t = \bigl[ \mathbf{z}_{\text{core}}(t), \mathbf{z}_{1}(t), \dots, \mathbf{z}_{N}(t) \bigr],
\end{align}
where $\mathbf{z}_{\text{core}}(t) \in \mathbb{R}^3$ corresponds to the raw outputs for the macro-layer slices, and $\mathbf{z}_{n}(t) = [z_{n,e}, z_{n,u}, z_{n,m}]$ denotes the distribution weights for service types at the $n$-th HAA.

To ensure that the agent actions strictly adhere to the bandwidth capacity constraints in~\eqref{eq:core_slice_sum} and~\eqref{eq:uav_subslice_sum} without sacrificing the differentiability of the network, we implement a \textit{double soft-max projection} mechanism. This nested transformation converts the raw actor network outputs into physically feasible allocation ratios through the following stages:
\begin{itemize}
    \item \textit{Macro-Layer Slice Projection}: The first three elements $\mathbf{z}_{\text{core}}(t)$ are processed via a soft-max function to obtain the global slice ratios $S_{i}(t)$ for eMBB, URLLC, and mMTC services. To prevent resource starvation and preserve dense gradient propagation, we apply a linear scaling:
    \begin{equation}
        S_{i}(t) = (1 - 3\rho) \cdot \text{soft-max}(z_{i}(t)) + \rho,
    \end{equation}
    where $z_{i}(t)$ denotes the continuous action value within $\mathbf{z}_{\text{core}}(t)$, and $\rho$ represents the minimum reserved slice ratio. This stage ensures structural compliance with constraint~\eqref{eq:core_slice_sum}. 
    \item \textit{HAA-Layer Distribution Projection}: For each HAA $n$, the agent determines the relative spatial weights across the three services. By applying a local soft-max operation over $\mathbf{z}_{n}(t)$, the framework standardizes the spatial priorities and ensures that the sub-slice ratios $g_{n,i}(t)$ for each service type sum to unity across all HAAs, thus satisfying constraint~\eqref{eq:uav_subslice_sum}.
    \item \textit{Deterministic Resource Mapping}: The final bandwidth $B_{n,i}(t)$ allocated to slice $i$ at UAV $n$ is determined by~\eqref{eq:bandwidth_alloc} using the effective resource proportion $f_{n,i}(t)$ from~\eqref{eq:effective_resource_proportion}.
\end{itemize}

To clarify the technical necessity of the Double Soft-max Projection layer, we explicitly analyze the dual-layer coupling constraints that cause standard independent projections to fail. Let $\mathbf{u}(t)$ denote the macro-tier allocation vector across HAAs and $\mathbf{v}_n(t)$ represent the micro-tier service slice distribution vector localized within HAA $n$. The joint resource orchestration action must satisfy a rigid hierarchical conservation law, where the absolute physical capacity allocated to a specific network slice is bounded by the multiplicative product of these two vectors. If standard independent softmax layers were utilized, the policy network would output separate vectors lacking physical scaling coherence. 
Enforcing the joint boundary via post-hoc clipping or rule-based truncation introduces non-differentiable operations that break the continuous policy gradient flow, leading to severe training instability. By contrast, the Double Soft-max Projection layer implements a nested conditionally dependent tensor mapping that explicitly enforces the joint dual-layer simplex constraint by construction. This specialized mathematical architecture preserves continuous differentiability across both resource strata, allowing exact policy gradients to backpropagate seamlessly through the dual-layer parameters to accelerate convergence.

\subsection{Reward Function Design and RAPO Mechanism}
\label{sec:reward_function}
To guide the agent toward an effective policy that balances heterogeneous service satisfaction with physical feasibility, we design a reward function characterized by differentiable utility shaping and the RAPO mechanism. This approach addresses the sparse-gradient issues inherent in binary QoS indicators while ensuring strict adherence to mission-critical constraints under volatile workloads.

\subsubsection{Differentiable Satisfaction Functions}
To facilitate effective gradient-based learning, we replace conventional binary QoS indicators with differentiable, piecewise-linear satisfaction functions $\hat{r}_{n,i}(\cdot) \in [0, 1]$. This design allows the agent to receive continuous feedback even when the SLA requirements are not fully met.
\begin{itemize}
    \item \textit{eMBB Satisfaction}: For eMBB services, the satisfaction level is driven by the achieved data rate $R_{n,e}(t)$. The utility is mapped between the minimum survival rate $R_{\min,e}$ and the target request rate $R_{{\rm req},e}$ as follows:
    \begin{equation}\label{eq:reward_e}
        \hat r_{n,e}(t) = \begin{cases}
            0, & R_{n,e}(t) \le R_{\min,e},\\
            \frac{R_{n,e}(t)-R_{\min,e}}{R_{{\rm req},e}-R_{\min,e}}, & R_{\min,e} < R_{n,e}(t) \le R_{{\rm req},e}, \\
            1, & R_{n,e}(t) > R_{{\rm req},e}.
        \end{cases}
    \end{equation}
    
    \item \textit{URLLC Satisfaction}: Given the latency-sensitive nature of URLLC, the satisfaction function is defined by the observed delay $D_{n,u}(t)$. The utility remains at its peak for delays below the target $D_{{\rm req},u}$ and drops to zero once the delay exceeds the maximum tolerable threshold $D_{\max,u}$:
    \begin{equation}\label{eq:reward_u}
        \hat r_{n,u}(t) =\begin{cases}
            0, & D_{n,u}(t) > D_{\max,u}, \\
            \frac{D_{\max,u} - D_{n,u}(t)}{D_{\max,u} - D_{{\rm req},u}}, & D_{{\rm req},u} < D_{n,u}(t) \le D_{\max,u}, \\
            1, & D_{n,u}(t) \leq D_{{\rm req},u}. 
        \end{cases}
    \end{equation}
    
    \item \textit{mMTC Satisfaction}: For mMTC services, the satisfaction level is determined by the packet success rate $G_{n,m}(t)$. The utility reflects the reliability of massive device connectivity between the minimum acceptable level $G_{\min,m}$ and the target reliability $G_{{\rm req},m}$:
    \begin{equation}\label{eq:reward_m}
        \hat r_{n,m}(t) = \begin{cases}
            1, & G_{n,m}(t) \geq G_{{\rm req},m}, \\
            \frac{G_{n,m}(t) - G_{\min,m}}{G_{{\rm req},m} - G_{\min,m}}, & G_{\min,m} \le G_{n,m}(t) < G_{{\rm req},m}, \\
            0, & G_{n,m}(t) < G_{\min,m}.
        \end{cases}
    \end{equation}
\end{itemize}

\subsubsection{Global Reward and Differentiable Penalties}
While conventional approaches rely on heuristic penalty terms to discourage boundary violations, they often introduce non-differentiable artifacts. In contrast, our double soft-max projection layer enforces simplex constraints by construction. Consequently, the formulation inherently obviates the need for explicit bandwidth boundary penalties, thereby preserving the gradient continuity required for stable convergence.

Therefore, the system penalty $P_{\text{system}}(t)$ is purely dedicated to penalizing severe URLLC latency violations, ensuring that mission-critical links are safeguarded:
\begin{equation}
    P_{\text{system}}(t) = \varpi \sum_{n=1}^{N} \left( \frac{D_{n,u}(t) - D_{\max,u}}{D_{\max,u}} \right)^{+}
\end{equation}
where $\varpi$ signifies the URLLC delay penalty coefficient regulating the scaling severity of delay violations, and $(\cdot)^{+}$ denotes the positive part function. Without loss of generality, $\varpi$ is parameterized as $1.0$ to serve as the normalized reference baseline for the penalty domain, thereby enabling the reward weight vectors $\boldsymbol{\omega}_{\text{base}}$ and $\boldsymbol{\omega}_{\text{alert}}$ to be calibrated systematically relative to this unified penalty scaling. The global reward $r(t)$ observed by the agent at time slot $t$ is then formulated as the weighted aggregate of slice satisfactions minus the time-varying penalty:
\begin{equation} \label{eq:global_reward}
    r(t) = \sum_{n=1}^{N} \sum_{i \in \{e,u,m\}} \omega_i(t) \cdot \hat r_{n,i}(t) - \chi(t) \cdot P_{\text{system}}(t),
\end{equation}
where $\omega_i(t)$ represents the dynamic service weights governed by the RAPO mechanism, and $\chi(t)$ is the dynamic penalty scaling factor. To prioritize exploration in early training while ensuring strict constraint adherence during convergence, we implement a cyclical penalty factor $\chi(t)$ defined as:
\begin{align}
    \chi(t) & = \left(\chi_{\text{init}} + \chi_{\text{grow}} \times \frac{EP_{\rm current}}{EP_{\rm total}}\right) \notag\\
    & \times \left(1 + \chi_{\text{osc}} \cdot \sin\left(2\pi \cdot \frac{EP_{\rm current}}{T_{\text{osc}}}\right)\right),
\end{align}
where $EP_{\rm current}$ and $EP_{\rm total}$ are the current and total episode indices, respectively. The parameters $\{\chi_{\text{init}}, \chi_{\text{grow}}, \chi_{\text{osc}}, T_{\text{osc}}\}$ control the baseline, growth rate, oscillation amplitude, and oscillation period.

\subsubsection{Resilient Adaptive Priority Orchestration Mechanism}
A distinctive architectural feature of our framework is the RAPO mechanism governed by an adaptive smooth gating structure. This enables the agent to shift its priorities fluidly under extreme network pressure without inducing training oscillations. Based on the real-time penalty magnitude, the system autonomously transitions between two conceptual operational modes:
\begin{itemize}
    \item \textit{Balanced Mode}: When the network is stable ($P_{\text{system}}(t) \le P_{\text{thresh}}$), the system adopts the base weight vector $\boldsymbol{\omega}_{\text{base}}$ to maximize eMBB utility while maintaining baseline URLLC stability.
    \item \textit{Alert Mode}: Upon detecting significant QoS violations ($P_{\text{system}}(t) > P_{\text{thresh}}$), the system prioritizes the alert weight vector $\boldsymbol{\omega}_{\text{alert}}$ to strictly safeguard ultra-reliable transmissions.
\end{itemize}

Instead of an abrupt, non-differentiable mode switch, the dynamic weight vector $\boldsymbol{\omega}(t) = [\omega_e(t), \omega_u(t), \omega_m(t)]^\top$ is derived by seamlessly blending the base and alert mode vectors:
\begin{equation}
    \boldsymbol{\omega}(t) = (1 - \xi(t)) \cdot \boldsymbol{\omega}_{\text{base}} + \xi(t) \cdot \boldsymbol{\omega}_{\text{alert}},
\end{equation}
where $\xi(t) \in (0, 1)$ serves as the adaptive interpolation factor. To ensure a smooth transition, $\xi(t)$ is modeled via a sigmoid function evaluated against the instantaneous system penalty:
\begin{equation}\label{eq:sigmoid_gate}
    \xi(t) = \frac{1}{1 + \exp\left(-\frac{P_{\text{system}}(t) - P_{\text{thresh}}}{\kappa}\right)}.
\end{equation}
Here, $P_{\text{thresh}}$ represents the critical penalty threshold that triggers the alert state, and $\kappa$ acts as the temperature parameter controlling the steepness of the mode transition. This mathematical design enables an elastic priority adaptation where best-effort services are dynamically and gracefully scaled down to safeguard the mission-critical URLLC slices during extreme traffic demand surges.

\subsection{RAPO-TD3 Implementation Details}
To ensure reliable resource management in the non-stationary aerial backhaul environment, we detail the implementation of the proposed RAPO-TD3 framework. Our approach mitigates the overestimation bias inherent in standard actor-critic methods by maintaining twin critic networks, $Q_{\phi_1}$ and $Q_{\phi_2}$, and adopting the minimum of their estimates for value updates. 

Furthermore, to strike a critical balance between global architectural stability and localized edge node exploration, we introduce a \textit{selective exploration noise} mechanism. Specifically, the Ornstein-Uhlenbeck (OU) exploration noise is exclusively injected into the HAA-layer spatial actions $\mathbf{z}_{n}(t)$, while the macro-layer ratios $\mathbf{z}_{\text{core}}(t)$ are strictly generated by deterministic policy outputs. This ensures that the MSO does not destabilize the entire network core while local HAAs explore optimal spatial distributions.

Finally, we replace uniform sampling with $\alpha$-prioritized experience replay~\cite{PER_2016}. This mechanism prioritizes transitions with larger temporal difference (TD) errors, which correspond to rare critical states where QoS requirements are violated. During execution, the RAPO mechanism dynamically calibrates the multi-service priority weights to compute the global reward, seamlessly guiding the agent to safeguard URLLC constraints. The integration of the double soft-max projection, selective noise, and the RAPO mechanism is designed so that every evaluated action satisfies the mathematical simplex bounds while accelerating convergence. The complete algorithmic procedure is detailed in Algorithm~\ref{alg:rapo_td3}.

\begin{algorithm2e}[!t]
\small
\SetAlgoLined
\SetKwInOut{Input}{Input}
\SetKwInOut{Output}{Output}

\caption{RAPO-TD3 for Dual-Layer Slicing in HAA Backhaul Networks}
\label{alg:rapo_td3}

\Input{Initial state $\mathbf{s}_1$, target update rate $\tau$, delayed update interval $d$, PER parameters $\alpha, \beta, \epsilon_{\text{per}}$, hyperparameters $\sigma_{\rm tgt}, c_{\rm clip}$, learning rates $\eta_a, \eta_c$, discount factor $\gamma$, maximum episodes $EP_{\rm total}$, total time steps $T$, mini-batch size $M$}
\Output{Optimized actor network $\pi_\theta$ for resource slicing}

Initialize critic networks $Q_{\phi_1}, Q_{\phi_2}$ and actor $\pi_\theta$ with random weights\;
Initialize target networks $\phi_1' \leftarrow \phi_1, \phi_2' \leftarrow \phi_2, \theta' \leftarrow \theta$\;
Initialize PER buffer $\mathcal{D}$\;

\For{$e = 1$ \KwTo $EP_{\rm total}$}{
    Observe initial state $\mathbf{s}_1 = \{ \{\mathbf{o}_{n,1}\}_{n=1}^N, \mathbf{S}(0), \{\mathbf{g}_n(0)\}_{n=1}^N, \boldsymbol{\psi}_1 \}$\;

    \For{$t = 1$ \KwTo $T$}{
        Select base action $\mathbf{a}_t = \pi_\theta(\mathbf{s}_t)$\;
        \tcp{Selective Exploration Noise}
        Inject OU noise $\boldsymbol{\epsilon}$ exclusively into HAA-layer sub-slice actions $\mathbf{z}_{n}(t)$\;
        
        \tcp{Double Soft-max Projection}
        Stage-1: Map $\mathbf{z}_{\text{core}}(t)$ to macro-layer ratios $S_i(t)$ such that $\sum_{i} S_i(t) = 1$\;
        Stage-2: Map $\mathbf{z}_{n}(t)$ to spatial distribution weights $g_{n,i}(t)$\;
        Calculate final bandwidth fraction: $f_{n,i}(t) = S_i(t) \cdot g_{n,i}(t)$\;
        Calculate physical bandwidth: $B_{n,i}(t) = f_{n,i}(t) \cdot B_{\rm total}$\;        
        
        \tcp{RAPO Evaluation \& Execution}
        Execute $B_{n,i}(t)$, compute dynamic priority weights via the RAPO mechanism, and observe global reward $r(t)$ by \eqref{eq:global_reward} and next state $\mathbf{s}_{t+1}$\;
        
        \tcp{PER Storage \& Priority Update}
        Store transition $(\mathbf{s}_t, \mathbf{a}_t, r(t), \mathbf{s}_{t+1})$ in $\mathcal{D}$ with maximum priority\;
        
        \tcp{Network Update Stage with Twin-Critic}
        Sample mini-batch of $M$ transitions from $\mathcal{D}$ with probability $P(b) = p_b^\alpha / \sum_j p_j^\alpha$\;
        Generate smoothed target noise $\tilde{\boldsymbol{\epsilon}} \sim \text{clip}(\mathcal{N}(0, \sigma_{\rm tgt}), -c_{\rm clip}, c_{\rm clip})$\;
        Compute target action: $\tilde{\mathbf{a}} \leftarrow \text{clip}(\pi_{\theta'}(\mathbf{s}_{t+1}) + \tilde{\boldsymbol{\epsilon}}, \mathbf{a}_{\rm low}, \mathbf{a}_{\rm high})$\;
        $y \leftarrow r_b + \gamma \min_{j=1,2} Q_{\phi_j'}(\mathbf{s}_{t+1}, \tilde{\mathbf{a}})$\;
        Update critics with learning rate $\eta_c$ by minimizing weighted MSE loss $\mathcal{L} = \frac{1}{M} \sum_{b=1}^{M} w_b (y - Q_{\phi_j}(\mathbf{s}_b, \mathbf{a}_b))^2$\;
        
        \If{$t \pmod d == 0$}{
            Update actor $\theta$ with learning rate $\eta_a$ using sampled policy gradient $\nabla_\theta J$\;
            Soft update target networks: $\theta' \leftarrow \tau\theta + (1-\tau)\theta'$\;
        }
        Update priority $p_b \leftarrow |y - Q_{\phi_1}(\mathbf{s}_b, \mathbf{a}_b)| + \epsilon_{\text{per}}$ for sampled transitions\;
    }
}
\end{algorithm2e}

\subsection{Complexity Analysis}
The computational and structural complexity of the proposed hierarchical slicing framework is analyzed across two distinct phases: offline centralized training and online decentralized execution.

During the online execution phase, the centralized MSO strictly relies on the trained actor network $\pi_\theta$ to generate real-time allocation decisions. Let $L_a$ denote the number of fully connected layers in the actor network, and $U_l$ represent the number of neurons in the $l$-th layer, where $U_0$ and $U_{L_a}$ correspond to the dimensions of the state vector $\mathbf{s}_t$ and the action vector $\mathbf{a}_t$, respectively. The asymptotic time complexity for generating a single operational decision is bounded by $\mathcal{O}(\sum_{l=0}^{L_a-1} U_l U_{l+1})$. This forward-propagation calculation requires merely a fraction of a millisecond, mathematically satisfying the strict execution constraints mandated by URLLC services.

Conversely, the offline training phase incurs a higher computational footprint. During each gradient update step, the framework computes the forward and backward passes for both the actor and the twin critic networks using a mini-batch of size $M$. Furthermore, the integration of prioritized experience replay utilizing a sum-tree data structure introduces a sampling and priority-updating overhead of $\mathcal{O}(M \log |\mathcal{D}|)$, where $|\mathcal{D}|$ is the maximum capacity of the replay buffer. Since this training complexity is exclusively processed offline by high-performance computing clusters at the core network, it remains completely decoupled from real-time aerial operations.

Regarding the signaling overhead, the MSO requires state gathering from $N$ HAAs and subsequent action broadcasting at the beginning of each control period. The dimension of the gathered state vector scales linearly with the number of agents, yielding a communication complexity of $\mathcal{O}(N)$. This strict linear scaling ensures that the signaling overhead does not exponentially explode as the aerial fleet expands, efficiently addressing the scalability requirements for dense 6G network deployments.

\subsection{Theoretical Convergence and Stability Analysis}
\label{sec:convergence_analysis}

Although providing a global convergence proof for deep reinforcement learning with non-linear neural network approximators remains an open mathematical challenge, we establish the theoretical stability and convergence properties of the proposed RAPO-TD3 framework. The convergence guarantees of our architecture are primarily driven by the properties introduced by the double soft-max projection and the RAPO mechanism.

\begin{property}[Action and Reward Boundedness]
    Let $\mathcal{A}$ denote the transformed action space and $\mathcal{R}$ denote the reward space.
    The double soft-max projection restricts the executed bandwidth allocations to the joint probability simplex, i.e., $\sum_{n=1}^{N} \sum_{i \in \{e,u,m\}} f_{n,i}(t) = 1$, with $f_{n,i}(t) \in [0,1]$.
    Consequently, the action space $\mathcal{A}$ is compact and bounded. Furthermore, as defined in \eqref{eq:reward_e} through \eqref{eq:reward_m}, the piecewise-linear satisfaction functions yield utilities bounded within $[0,1]$, and the weight vector $\boldsymbol{\omega}(t)$ is convexly interpolated.
    Therefore, the global reward $r(t)$ is bounded by $r_{\min} \le r(t) \le r_{\max}$, preventing diverging returns and bounding the temporal difference (TD) error variance during experience replay.
\end{property}

\begin{property}[Lipschitz Continuity of the Objective Landscape]
    A fundamental requirement for stable gradient descent in deterministic policy gradient frameworks is the Lipschitz continuity of the objective function~\cite{Silver2014}.
    Conventional approaches with binary QoS penalties or heuristic hard-switching rules introduce step-function discontinuities, causing infinite gradient variances.
    In our framework, both the double soft-max projection and the RAPO sigmoid gating function \eqref{eq:sigmoid_gate} are continuously differentiable ($\mathcal{C}^1$ functions) across the entire state-action space.
    Let $J(\theta)$ be the expected return objective of the actor network.
    Following standard optimization theory~\cite{Boyd2004}, the gradients of the transformations with respect to the pre-activation outputs $\mathbf{z}(t)$ are bounded by the properties of the soft-max derivative (Jacobian matrix norms are bounded by $1$) and the sigmoid derivative (bounded by $\frac{1}{4\kappa}$).
    Consequently, the overall reward function is $L_r$-Lipschitz continuous, satisfying:    
    \begin{equation}
        |r(\mathbf{s}, \mathbf{a}_1) - r(\mathbf{s}, \mathbf{a}_2)| \le L_r ||\mathbf{a}_1 - \mathbf{a}_2||_2, \quad \forall \mathbf{a}_1, \mathbf{a}_2 \in \mathcal{A},
    \end{equation}
    where $L_r > 0$ is the Lipschitz constant of the reward function.
\end{property}

Building upon the compact state-action boundary and the smooth, gradient-bounded objective landscape, we formulate the stability guarantee of the proposed learning architecture. These fundamental properties collectively eliminate the risks of gradient explosion and unbounded Q-value divergence, paving the way for the algorithmic stability theorem.

\begin{thm}[Algorithmic Stability Guarantee]
    For the proposed RAPO-TD3 framework, the variance of the sampled policy gradient $\nabla_\theta J$ is bounded. When coupled with the twin-critic overestimation clipping and a prioritized sampling probability $P(b) > 0$ for all transitions in $\mathcal{D}$, the Robbins-Monro conditions for stochastic approximation~\cite{Robbins1951} are satisfied. This implies asymptotic convergence to a stable local optimum policy under standard decaying learning rate assumptions without divergent oscillations.
\end{thm}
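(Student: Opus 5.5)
The plan is to reduce the statement to a standard stochastic-approximation template of the form
\[
\theta_{k+1} = \theta_k + \eta_k\bigl(h(\theta_k) + M_{k+1}\bigr),
\]
where $h(\theta)=\nabla_\theta J(\theta)$ is the mean update field and $M_{k+1}$ is a martingale-difference noise. We then verify the Robbins--Monro hypotheses one at a time, using the two preceding Properties. The mean field is obtained from the deterministic policy gradient theorem~\cite{Silver2014},
\[
\nabla_\theta J = \mathbb{E}\bigl[\nabla_\theta \pi_\theta(\mathbf{s})\,\nabla_{\mathbf{a}} Q(\mathbf{s},\mathbf{a})|_{\mathbf{a}=\pi_\theta(\mathbf{s})}\bigr].
\]
The noise term is the difference between this expectation and the importance-weighted mini-batch estimate drawn from the prioritized buffer.

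\textbf{Step 1: bounded gradient variance.} The aim is a uniform bound on the second moment of each factor in the policy gradient.
\begin{enumerate}
\item By the Action and Reward Boundedness property, $|r|\le r_{\max}':=\max(|r_{\min}|,|r_{\max}|)$. With $\gamma<1$, every critic target $y$ and hence every $Q$ value is bounded by $r_{\max}'/(1-\gamma)$. The twin-critic minimum keeps this bound from being inflated by overestimation, since the minimum of two estimates never exceeds either one.
\item By the Lipschitz property, $r(\mathbf{s},\cdot)$ is $L_r$-Lipschitz. This is inherited by $Q$ through the composition with the soft-max and sigmoid Jacobian bounds, giving $\|\nabla_{\mathbf{a}} Q\|\le L_Q$ for some constant $L_Q$.
\item The factor $\nabla_\theta\pi_\theta$ is bounded on the relevant parameter set. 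We assume a bounded parameter set, or equivalently gradient clipping, together with Lipschitz activations; this is the usual standing assumption for actor networks.
\item PER with $P(b)>0$ and bounded importance weights $w_b\le 1$ makes the mini-batch estimator a weighted average of bounded terms.
\end{enumerate}
It then follows that $\mathbb{E}\|M_{k+1}\|^2\le C$, which is the claimed variance bound.

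\textbf{Step 2: Robbins--Monro conditions.} We take decaying learning rates $\eta_a^{(k)}$ and $\eta_c^{(k)}$ with $\sum_k\eta_k=\infty$ and $\sum_k\eta_k^2<\infty$. We also impose $\eta_a^{(k)}/\eta_c^{(k)}\to 0$, which the delayed update interval $d$ reinforces, so that the critic and actor form a two-timescale scheme. Condition $P(b)>0$ for every stored transition ensures that every state-action pair in the buffer is visited infinitely often. Consequently the critic's fixed point is the true (clipped) Bellman fixed point, and the sampled gradient is asymptotically unbiased once the critic has converged. The boundedness of $Q$ and of the gradients established in Step 1 gives stability of the iterates, in the sense that $\sup_k\|\theta_k\|<\infty$ a.s. The Lipschitz continuity of $h$ follows from the $\mathcal{C}^1$ structure of the double soft-max and the sigmoid gate.

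\textbf{Step 3: asymptotic convergence.} We invoke the ODE method (Borkar--Meyn style). Under the conditions of Step 2, the iterates track the ODE $\dot\theta=\nabla_\theta J(\theta)$, with $J$ acting as a Lyapunov function. They therefore converge a.s. to the set $\{\nabla_\theta J=0\}$; generically this means a local optimum, which is the stated ``stable local optimum without divergent oscillations''.

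\textbf{Main obstacle.} The hardest step is justifying the two-timescale critic convergence together with the a.s. boundedness of the iterates under nonlinear function approximation, since neither holds in general. I would handle this by stating it explicitly as a standing assumption: the critic class is rich enough and stable, and the parameters are projected onto a compact set. The result would then be framed as convergence of the projected iteration, which is the standard and honest route. A secondary subtlety is that $\hat r_{n,i}$ is only piecewise linear and the term $(\cdot)^+$ is nondifferentiable. Lipschitz continuity still holds, so the conclusion should be phrased in terms of Clarke-stationary points, or obtained by a smoothing argument.
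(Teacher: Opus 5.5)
Your skeleton is the paper's. Bounded rewards give bounded targets and critic values; bounds on $\nabla_{\mathbf{a}}Q$ and $\nabla_\theta\pi_\theta$ give a finite mini-batch variance; and Robbins--Monro step sizes, $P(b)>0$ and twin-critic clipping are then fed to a stochastic-approximation theorem to land on $\{\nabla_\theta J=0\}$. You are more candid than the paper about what must be assumed (two timescales, projection onto a compact set, Clarke stationarity for the piecewise-linear and $(\cdot)^{+}$ terms). Even so, several steps do not go through as written.

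In Step 1.2, the Lipschitz constant of $r(\mathbf{s},\cdot)$ does not transfer to $Q(\mathbf{s},\cdot)$. We have $Q(\mathbf{s},\mathbf{a})=r(\mathbf{s},\mathbf{a})+\gamma\,\mathbb{E}[V(\mathbf{s}')\mid\mathbf{s},\mathbf{a}]$. In this MDP the action enters $\mathbf{s}'$ directly through the decision-feedback components $\mathbf{S}(t)$ and $\mathbf{g}_n(t)$, so you need control of how $V$ varies in those coordinates. The soft-max and sigmoid Jacobian bounds are already absorbed into $L_r$ and say nothing about this continuation term. The paper covers this step with an explicit assumption of smooth transition dynamics. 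Within your own framework there is a cleaner fix: the actor only ever differentiates the learned $Q_{\phi_1}$, so $\|\nabla_{\mathbf{a}}Q_{\phi_1}\|$ is bounded by a compact critic-parameter set plus Lipschitz activations, exactly as you bound $\nabla_\theta\pi_\theta$.

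The more substantive gap is the identification of the mean field with $\nabla_\theta J$; the paper shares it, but your ODE formulation exposes it. By your own Step 2, the critic tracks the fixed point $\tilde Q_\theta$ of the target-smoothed, action-clipped Bellman operator. That is the value of a noise-perturbed target policy, further biased downward by the twin minimum under approximation error, not $Q^{\pi_\theta}$. In addition, the actor's states come from the replay distribution $d_{\mathcal{D}}$, not the discounted on-policy distribution that the deterministic policy gradient theorem requires. The averaged field is therefore $\bar h(\theta)=\mathbb{E}_{\mathbf{s}\sim d_{\mathcal{D}}}\bigl[\nabla_\theta\pi_\theta(\mathbf{s})\,\nabla_{\mathbf{a}}\tilde Q_\theta(\mathbf{s},\mathbf{a})|_{\mathbf{a}=\pi_\theta(\mathbf{s})}\bigr]$, which need not be the gradient of anything. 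So the estimator is not asymptotically unbiased for $\nabla_\theta J$, and $J$ is not a Lyapunov function for $\dot\theta=\bar h(\theta)$. Moreover, the conditional mean of the sampled gradient depends on the buffer contents, a Markov process driven by past iterates, so the martingale-difference template must be replaced by a controlled Markov-noise version.

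Your coverage argument also fails. $P(b)>0$ on a finite FIFO buffer does not make every state-action pair of a continuous space visited infinitely often. Here OU noise is injected only into $\mathbf{z}_n(t)$, never into $\mathbf{z}_{\text{core}}(t)$, so the data contain no exploratory variation of the macro-layer action at a given state. The slope $\nabla_{\mathbf{z}_{\text{core}}}Q_{\phi}$, which drives the macro-layer actor update, is therefore set by extrapolation rather than by data.

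An honest version of your plan has two options. One is to conclude convergence of the projected iteration to the zero set of $\bar h$ (in the set-valued sense, given the nonsmooth terms) under an explicit Lyapunov assumption on $\bar h$. Projection is genuinely needed here, since bounded increments with $\sum_k\eta_k=\infty$ do not give $\sup_k\|\theta_k\|<\infty$. The other is to add assumptions under which $\bar h=\nabla_\theta J$: an exact or compatible critic, on-policy state weighting, and exploration in every action coordinate.
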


\begin{proof}
    To establish asymptotic convergence, we verify that the proposed architecture satisfies the Robbins-Monro conditions for stochastic approximation~\cite{Robbins1951}. First, let $Q^{\pi}(\mathbf{s}, \mathbf{a})$ denote the true action-value function under policy $\pi_\theta$. Based on the boundedness established in Property 1, the instantaneous reward is bounded within $[r_{\min}, r_{\max}]$. Given the infinite-horizon discounted return with a discount factor $\gamma \in (0,1)$, the target value $y$ and the critic estimates $Q_{\phi_j}(\mathbf{s}, \mathbf{a})$ are bounded within the compact interval $[r_{\min}/(1-\gamma), r_{\max}/(1-\gamma)]$, eliminating the risk of numerical divergence.

    Second, the deterministic policy gradient is formulated as $\nabla_\theta J = \mathbb{E}[ \nabla_\theta \pi_\theta(\mathbf{s}) \nabla_{\mathbf{a}} Q_{\phi_j}(\mathbf{s}, \mathbf{a}) |_{\mathbf{a}=\pi_\theta(\mathbf{s})} ]$. Under Property 2, the reward function is $L_r$-Lipschitz continuous, which, under the standard assumption of smooth environmental transition dynamics, inherently bounds the spatial derivative of the parameterized critic network such that $\| \nabla_{\mathbf{a}} Q_{\phi_j}(\mathbf{s}, \mathbf{a}) \|_2 \le L_{\text{critic}} < \infty$, where $L_{\text{critic}}$ represents the uniform Lipschitz constant of the critic network gradient. Simultaneously, since the actor network $\pi_\theta(\mathbf{s})$ is constructed using fully connected layers with smooth activation functions over a compact parameter space, its structural gradient is also bounded by $\| \nabla_\theta \pi_\theta(\mathbf{s}) \|_2 \le L_{\text{actor}} < \infty$, where $L_{\text{actor}}$ is the corresponding bounding constant. 
    
    During training, transitions are drawn via prioritized sampling rather than uniform distribution. By applying \textit{Importance Sampling} (IS) weights to correct the non-uniform sampling bias, the stochastic policy gradient estimate remains mathematically unbiased. Because the norm of the sample gradient mapping is bounded by the product of the Lipschitz constants, the variance of this unbiased estimator computed over a mini-batch of size $M$ scales inversely with $M$ and is structurally upper-bounded by a finite constant variance $\sigma^2$:
    \begin{equation}
        \mathbb{E}\left[ \left\| \hat{\nabla}_\theta J - \nabla_\theta J \right\|_2^2 \right] \le \frac{(L_{\text{critic}} L_{\text{actor}})^2}{M} \le \sigma^2 < \infty.
    \end{equation}

    Let $\eta_{a,t}$ denote the decaying learning rate at training step $t$. By enforcing the standard step-size criteria $\sum_{t=1}^{\infty} \eta_{a,t} = \infty$ and $\sum_{t=1}^{\infty} \eta_{a,t}^2 < \infty$, the bounded gradient variance ensures that the stochastic noise cancels out asymptotically. Furthermore, the twin-critic clipping mechanism in TD3 minimizes the positive overestimation bias, ensuring that the policy updates do not overshoot into unstable regions of the action simplex. Since the prioritized experience replay mechanism guarantees a non-zero selection probability $P(b) > 0$ for all historical transitions, the tracking sequence cannot entrap in empty gradient zones. Consequently, according to the stochastic approximation theorem, the actor network parameter vector $\theta$ converges to a stationary local optimum satisfying $\nabla_\theta J = 0$ with probability $1$.
\end{proof}

\section{Simulation Results and Analysis}
\label{sec:simulation_results}

\subsection{Simulation Setup and Traffic Profile}
\label{sec:simulation_setup}

The simulation environment is designed to evaluate the performance of the proposed RAPO-TD3 framework in a dynamic aerial backhaul scenario. To accurately capture the complex inter-cell interference and multi-agent resource contention without succumbing to the exponential state-space explosion typical of deep reinforcement learning, we model a dense interference sub-cluster comprising $N=4$ HAAs. This configuration is consistent with standard 3GPP deployment scenarios for aerial vehicles~\cite{3gpp_tr_36_777} and macro-cell sectoring interference models~\cite{3gpp_tr_38_901}. It provides a mathematically sound baseline for validating decentralized execution, as the previously proven $\mathcal{O}(N)$ computational complexity guarantees linear scalability to larger fleet deployments. 
The detailed communication parameters, QoS thresholds, and reinforcement learning hyperparameters are consolidated in Table~\ref{tab:simulation_parameters}.

\begin{table}[!t]
\caption{Consolidated Simulation and Hyperparameters}
\label{tab:simulation_parameters}
\centering
\setlength{\tabcolsep}{2pt} 
\begin{tabular}{llc}
\toprule
\textbf{Parameter Description} & \textbf{Symbol} & \textbf{Default Value} \\ 
\midrule
\multicolumn{3}{l}{\textit{Network Environment}} \\ 
\midrule
Number of HAAs & $N$ & 4 \\
Total System Bandwidth & $B_{\rm total}$ & 10 MHz \\
Centralized Macro Control Period & $T_C$ & 100 ms \\
MBS Downlink Power per HAA & $P_n$ & 23 dBm \\
Noise Power Spectral Density & $N_0$ & $1 \times 10^{-9}$ W/Hz \\
Carrier Frequency & $f_c$ & 2.5 GHz \\
HAA Operational Altitude & $H_{\text{haa}}$ & 120 m \\ 
Dense Urban Environment Constants & $a_{\text{env}}, b_{\text{env}}$ & 12.08, 0.11 \\
FSPL Bulk Normalization Constant & $C_{\text{FSPL}}$ & -147.55 \\
Additional Path Loss (LoS, NLoS) & $\eta_{\text{LoS}}, \eta_{\text{NLoS}}$ & 1.0 dB, 20.0 dB \\
Small-Scale Fading Profile & $\Omega_n(t)$ & Rayleigh ($s=1.0$) \\ 
\midrule
\multicolumn{3}{l}{\textit{Service Requirements}} \\ 
\midrule
eMBB Rate Threshold (Min, Req) & $R_{{\min},e}, R_{{\rm req},e}$ & 10 Mbps, 50 Mbps \\
URLLC Delay Threshold (Req, Max) & $D_{{\rm req},u}, D_{\max,u}$ & 2 ms, 4 ms \\
mMTC Success Rate (Min, Req) & $G_{{\min},m}, G_{{\rm req},m}$ & 0.90, 0.99 \\
URLLC Packet Size \& Error Bound & $L_{n,u}, \epsilon_u$ & 3000 bits, $10^{-5}$ \\
mMTC Packet Size & $L_{n,m}$ & 256 bits \\ 
mMTC Time Constraint & $T_{\max}^{m}$ & 20 ms \\ 
mMTC Maximum Retransmissions & $K_{\max}$ & 1 \\
URLLC Burst Probability & $P_{\text{burst}}$ & 0.20 per hour \\
URLLC Burst Duration & $T_{\text{burst}}$ & 0.2 hours \\ 
\midrule
\multicolumn{3}{l}{\textit{RAPO-TD3 Hyperparameters}} \\ 
\midrule
Actor/Critic Learning Rate & $\eta_a, \eta_c$ & $5 \times 10^{-5}, 5 \times 10^{-4}$ \\
Discount Factor & $\gamma$ & 0.99 \\
Target Update Rate & $\tau$ & 0.005 \\
Replay Buffer Capacity & $|\mathcal{D}|$ & $3\times 10^5$ \\
Mini-batch Size & $M$ & 256 \\
PER Parameters (Exponent, Const) & $\alpha, \epsilon_{\text{per}}$ & 0.6, $10^{-3}$ \\ 
Policy Smoothing (std, clip) & $\sigma_{\rm tgt}, c_{\rm clip}$ & 0.1, 0.3 \\
Delayed Actor Update Interval & $d$ & 2 \\ 
Minimum Reserved Slice Ratio & $\rho$ & 0.0 \\ 
Total Training Episodes & $EP_{\rm total}$ & 500 \\
Total Steps per Episode & $T$ & 200 \\
\midrule
\multicolumn{3}{l}{\textit{Reward \& RAPO Settings}} \\ 
\midrule
Base / Alert Mode Weights & $\boldsymbol{\omega}_{\text{base}}, \boldsymbol{\omega}_{\text{alert}}$ & $[7, 8, 2], [8, 9, 1]$ \\
Initial Penalty Threshold & $P_{\text{thresh}}$ & 0.05 \\ 
Penalty Adaptation Factor & $\kappa$ & 0.05 \\
URLLC Delay Penalty Coefficient & $\varpi$ & 1.0 \\
Penalty Factor (Init, Grow) & $\chi_{\text{init}}, \chi_{\text{grow}}$ & 0.2, 0.8 \\ 
Penalty Oscillation (Amp, Period) & $\chi_{\text{osc}}, T_{\text{osc}}$ & 0.1, 30 Episodes \\
\bottomrule
\end{tabular}
\end{table}

To mathematically evaluate the adaptive priority optimization capability governed by the proposed RAPO framework, we implement two non-stationary multi-service traffic models, as illustrated in Fig.~\ref{fig:3}. The simulation evaluates the system under the following distinct workload behaviors:
\begin{itemize}
    \item \textit{24-Hour Periodic Traffic}: As shown in Fig.~\ref{fig:3-a}, the arrival rates for eMBB, URLLC, and mMTC slices follow a non-stationary Poisson process with time-varying intensities. As depicted in the relative demand profile, the eMBB and mMTC traffic exhibit a clear diurnal cycle, peaking during daytime hours to simulate intense human-centric urban activity, while the URLLC traffic maintains a lower but strictly bounded baseline demand.
    \item \textit{Burst URLLC Traffic}: To simulate mission-critical emergencies, we introduce stochastic data bursts into the environment. A burst event occurs with an anomaly probability of $P_{\text{burst}} = 0.20$ evaluated at each operational hour. Once triggered, the localized workload surge persists for a deterministic duration of $T_{\text{burst}} = 0.2$ hours, which corresponds to 12 minutes of continuous peak pressure. During these anomalous intervals, the URLLC packet arrival rate is instantaneously scaled by a factor of $5.0$, representing a severe $500\%$ demand surge that forces the MSO to execute real-time slice priority adaptation. Fig.~\ref{fig:3-b} depicts a sample realization of this bursty traffic pattern, capturing the extreme multi-order spikes over the 24-hour temporal evaluation window.
\end{itemize}

\begin{figure}[!t]
    \centering
    \subfigure[24-Hour Periodic Traffic]{
        \label{fig:3-a}
        \includegraphics[width=.92\linewidth]{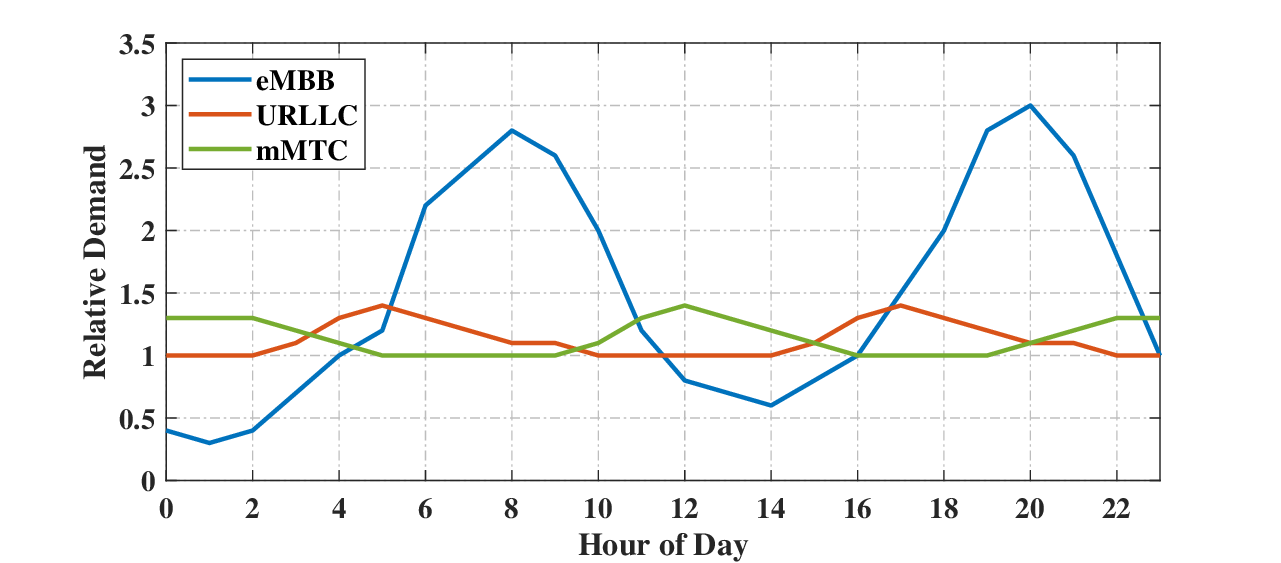}
    }\\
    \subfigure[Burst URLLC Traffic]{
        \label{fig:3-b}
        \includegraphics[width=.92\linewidth]{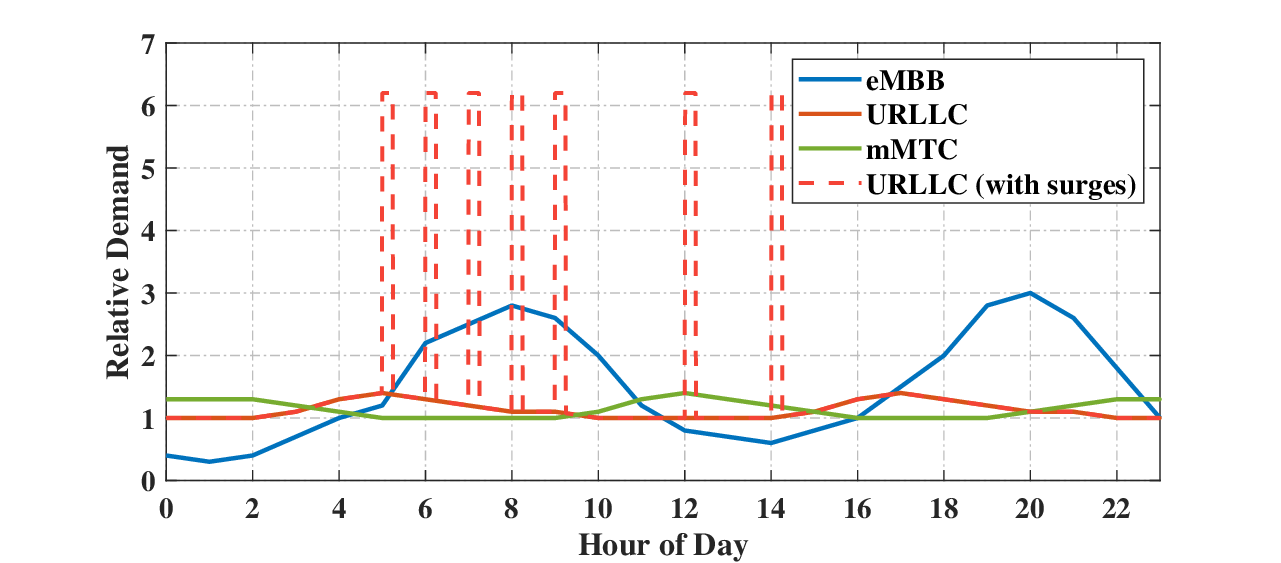}
    }
    \caption{Two different multi-service traffic demand patterns in the simulation environment: \subref{fig:3-a} 24-Hour Periodic Traffic, and \subref{fig:3-b} Burst URLLC Traffic.}
    \label{fig:3}
\end{figure}

To comprehensively demonstrate the effectiveness of the proposed RAPO-TD3 framework in handling dual-layer resource slicing and physical boundary constraints, we benchmark its performance against two state-of-the-art continuous-control reinforcement learning baselines, namely PPO and DDPG. The structural characteristics of these baseline schemes are outlined as follows:
\begin{itemize}
    \item \textit{PPO Baseline}: This scheme implements the PPO algorithm, which functions as an on-policy actor-critic framework utilizing a clipped surrogate objective function to bound policy updates. In this evaluation, PPO represents an unconstrained reward-maximization approach that lacks the specialized hierarchical projection architecture, thereby serving as a primary reference to evaluate the necessity of embedding structural constraints directly into the action space.
    \item \textit{DDPG Baseline}: This scheme implements the DDPG algorithm, which optimizes a deterministic policy through the gradients of a single critic network. This baseline operates without target policy smoothing or delayed actor updates, allowing us to isolate and verify the algorithmic efficacy of the twin-critic overestimation mitigation mechanisms inherent to the TD3 architecture under highly volatile environments.
\end{itemize}

\subsection{Convergence Performance and Constraint Satisfaction}
\label{sec:convergence_performance}

To validate the learning efficiency and the constraint adherence capabilities of the proposed framework, we first evaluate the algorithmic convergence under the normal 24-hour periodic traffic scenario. The performance of the proposed RAPO-TD3 algorithm is benchmarked against two widely adopted continuous-control DRL algorithms: DDPG and PPO. The training dynamics, including the cumulative reward, system penalty, and mission-critical satisfaction ratios over 500 episodes, are illustrated in Fig.~\ref{fig:convergence}.

An initial observation of the total cumulative reward in Fig.~\ref{fig:convergence:a} reveals that the PPO baseline numerically achieves a competitive raw reward score. However, the system penalty dynamics in Fig.~\ref{fig:convergence:b} expose a critical operational flaw. Because PPO lacks the structural double soft-max projection to accurately navigate the coupled multi-tier conservation laws, it fails to optimize the spatial distribution effectively. This structural deficiency causes PPO to over-allocate resources to eMBB services at the expense of mission-critical slices, triggering severe URLLC latency violations and resulting in penalty spikes reaching up to 0.35.

Conversely, the proposed RAPO-TD3 framework converges to a stable, constraint-aware operating point. Driven by the architectural double soft-max projection and the dynamic penalty scaling factor $\zeta(t)$, the RAPO-TD3 agent is able to navigate the complex non-convex action space without relying on constraint-violating shortcuts. As depicted in Fig.~\ref{fig:convergence:b}, the system penalty for RAPO-TD3 is aggressively suppressed to a near-zero level after the initial exploration phase, maintaining an exceptionally low and safe footprint compared to the baseline algorithms.

The effectiveness of the proposed constraint-aware design is further supported by the results shown in Fig.~\ref{fig:convergence:c}, which tracks the URLLC QoS satisfaction ratio. While the PPO baseline frequently drops below the mandatory reliability threshold, declining to approximately 0.92 during peak traffic hours, the proposed RAPO-TD3 framework consistently safeguards the latency-sensitive URLLC slices. Furthermore, Fig.~\ref{fig:convergence:d} explicitly illustrates the convergence trajectory of the mMTC QoS satisfaction ratio under this baseline scenario to address the holistic multi-service evaluation. In complete alignment with the reward and penalty dynamics, the proposed RAPO-TD3 framework tightly tracks the near-optimal 1.0 satisfaction boundary throughout the entire training process with minor statistical fluctuations. Conversely, the PPO baseline displays pronounced performance degradation, dropping down to a satisfaction level of 0.985 near training episodes 225 and 450. This unstable behavior further substantiates that unconstrained reward-maximization schemes compromise best-effort slice stability even under non-bursty diurnal workloads, whereas our constraint-aware architecture guarantees stable multi-service isolation and steady-state protocol compliance.

\begin{figure}[!t]
    \centering
    \subfigure[Convergence of Total Reward]{
        \label{fig:convergence:a}
        \includegraphics[width=0.48\linewidth]{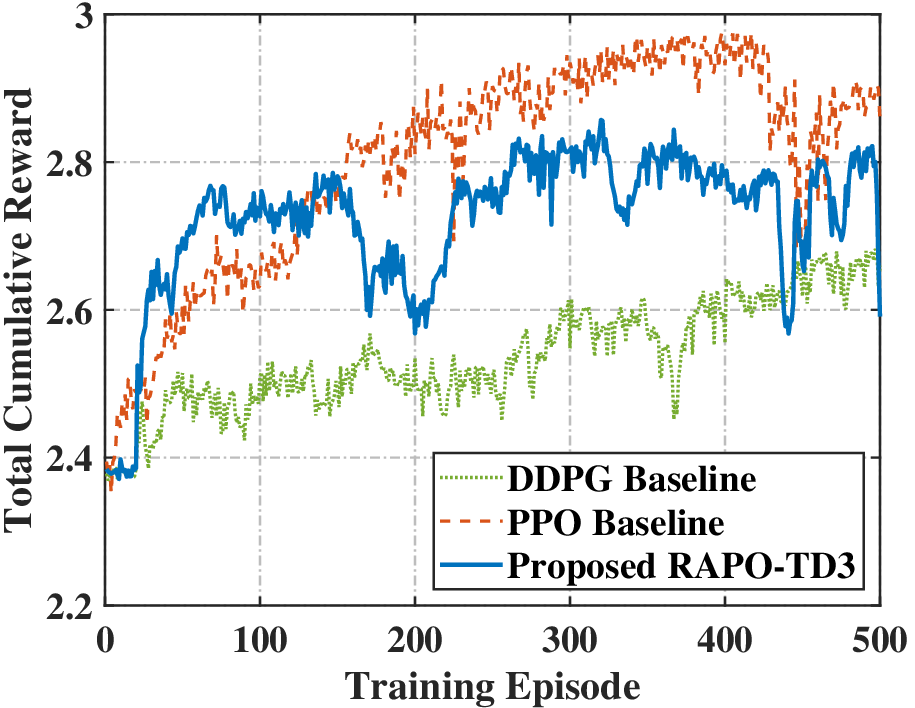}
    }%
    \subfigure[Constraint Violation Penalty]{
        \label{fig:convergence:b}
        \includegraphics[width=0.48\linewidth]{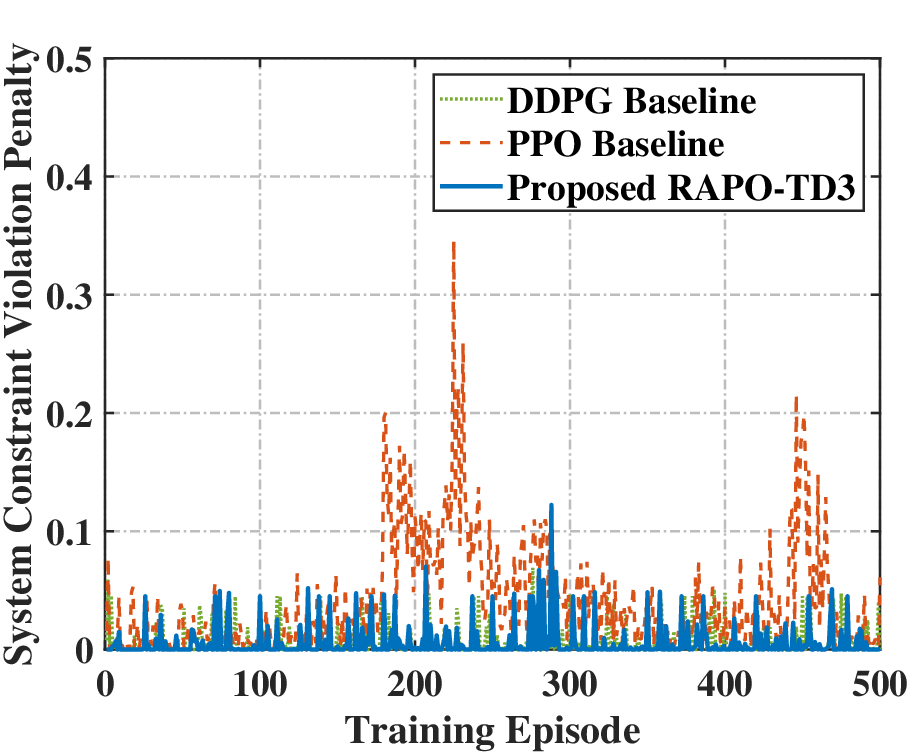}
    }\\
    \subfigure[URLLC QoS Satisfaction Ratio]{
        \label{fig:convergence:c}
        \includegraphics[width=0.48\linewidth]{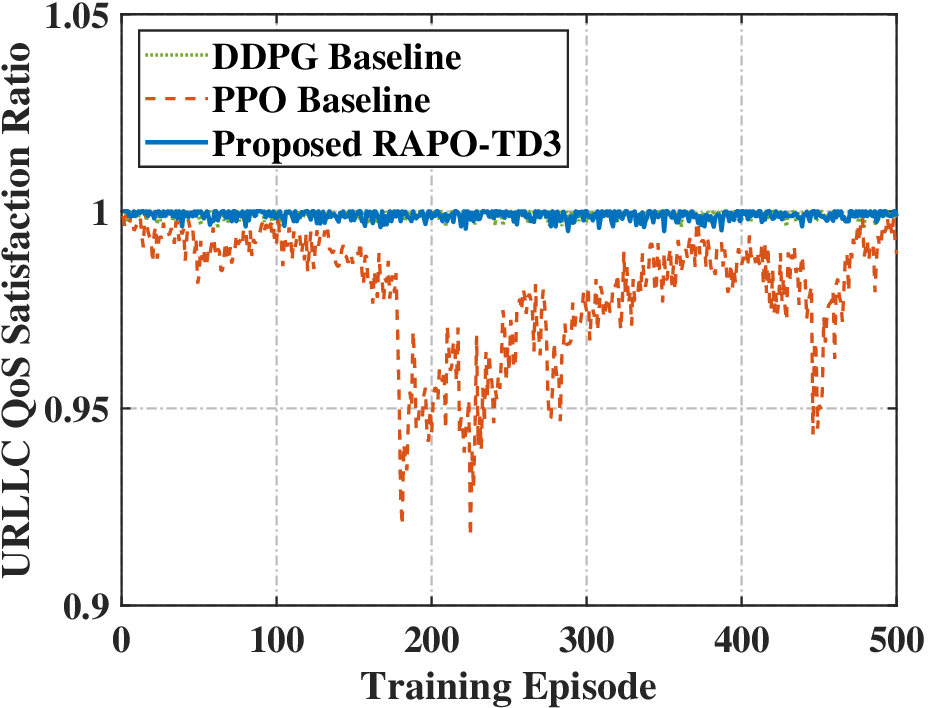}
    }%
    \subfigure[mMTC QoS Satisfaction Ratio]{
        \label{fig:convergence:d}
        \includegraphics[width=0.48\linewidth]{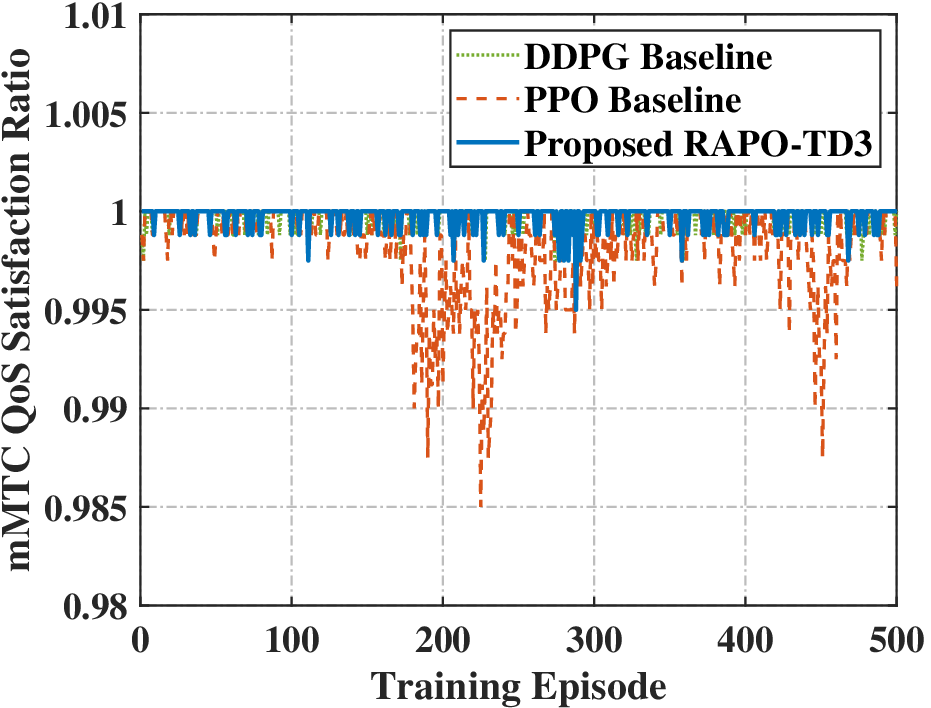}
    }
	\caption{Convergence performance and constraint satisfaction dynamics of different DRL algorithms under the normal 24-hour periodic traffic scenario: \subref{fig:convergence:a} Total cumulative reward, \subref{fig:convergence:b} Constraint violation penalty, \subref{fig:convergence:c} URLLC QoS satisfaction ratio, \subref{fig:convergence:d} mMTC QoS satisfaction ratio.}
    \label{fig:convergence}
\end{figure}

\subsection{Resilience and Resource Allocation Under Extreme URLLC Traffic Surges}
\label{sec:burst_performance}
To fully evaluate the operational resilience and stress-tolerance limits of the proposed framework during mission-critical emergencies, we expose the trained policies to a non-stationary bursty traffic profile. This scenario introduces acute $500\%$ surges in URLLC packet arrival rates, serving as a rigorous benchmark to test the dynamic priority orchestration agility of the MSO. The transient tracking of multi-service satisfaction ratios, constraint penalties, and resource allocations over 500 episodes is consolidated in Fig.~\ref{fig:resilience}.

As illustrated in Figs.~\ref{fig:resilience:a} and \ref{fig:resilience:b}, the PPO baseline exposes the structural vulnerability of unconstrained reward-maximization strategies. Driven by the impulse to maximize raw generalized utility, the PPO agent consistently over-allocates resources to eMBB services, maintaining an allocation profile near 0.98 as depicted in Fig.~\ref{fig:resilience:c}. When stochastic URLLC traffic spikes occur, this behavior triggers severe resource contention. Consequently, the PPO policy incurs sharp constraint violation penalties that spike above 3.0 and 3.3 around episodes 320 and 400, respectively. These boundary breaches cause the URLLC QoS satisfaction ratio to experience significant degradation, dropping to a low of 0.77. Such operational deficits indicate that standard continuous-control algorithms fail to safeguard mission-critical communication links under non-stationary pressure.

Conversely, the DDPG baseline exhibits an overly conservative learning trajectory. Although it appears to maintain acceptable URLLC satisfaction with a low penalty footprint (Figs.~\ref{fig:resilience:a} and \ref{fig:resilience:b}), its structural deficiency is clearly unveiled in Fig.~\ref{fig:resilience:c}. Due to severe policy underestimation bias, the DDPG agent fails to effectively optimize the multi-service action space, with its normalized eMBB resource allocation stagnating below 0.75 even after 500 episodes of training. This conservative suboptimality severely underutilizes the available aerial backhaul capacity, rendering it impractical for high-throughput cooperative networking.

\begin{figure}[!t]
    \centering
    \subfigure[URLLC QoS Satisfaction Ratio]{
        \label{fig:resilience:a}
        \includegraphics[width=0.48\linewidth]{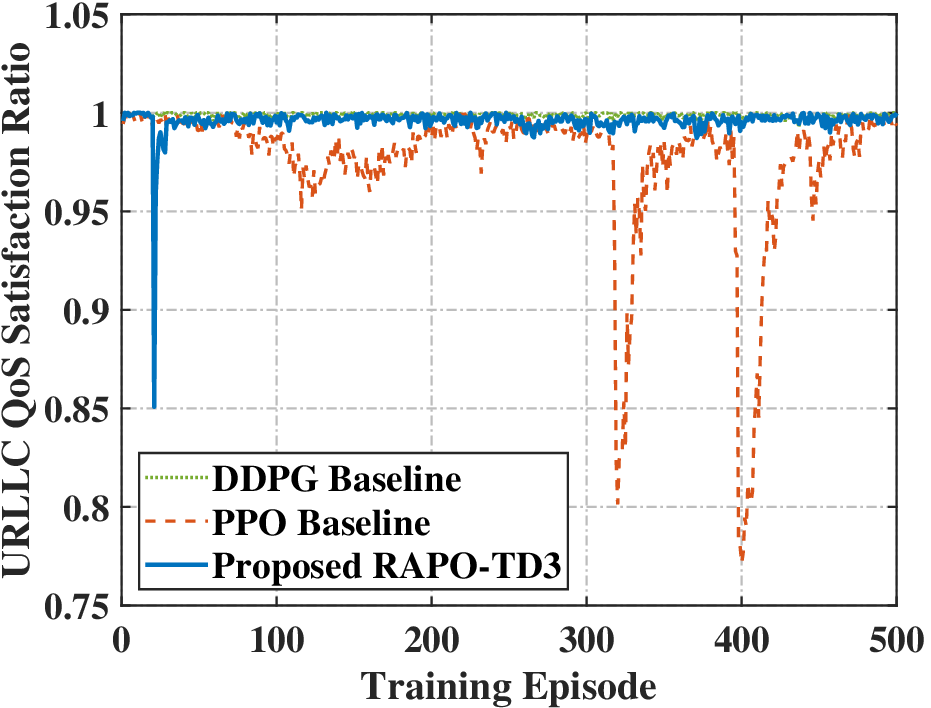}
    }%
    \subfigure[Constraint Violation Penalty]{
        \label{fig:resilience:b}
        \includegraphics[width=0.48\linewidth]{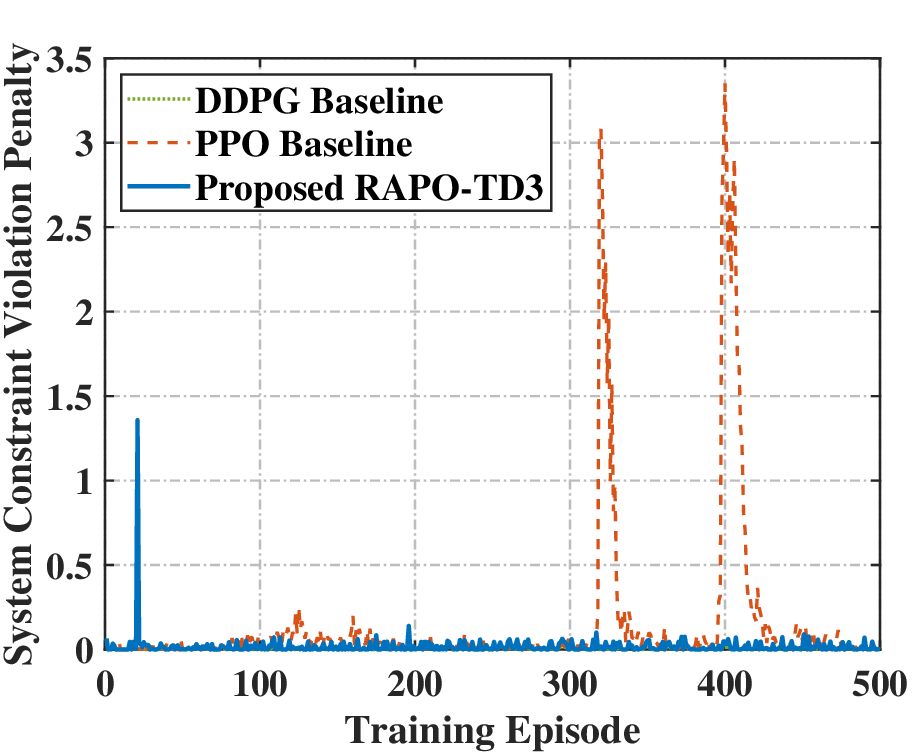}
    }\\
    \subfigure[eMBB QoS Satisfaction Ratio]{
        \label{fig:resilience:c}
        \includegraphics[width=0.48\linewidth]{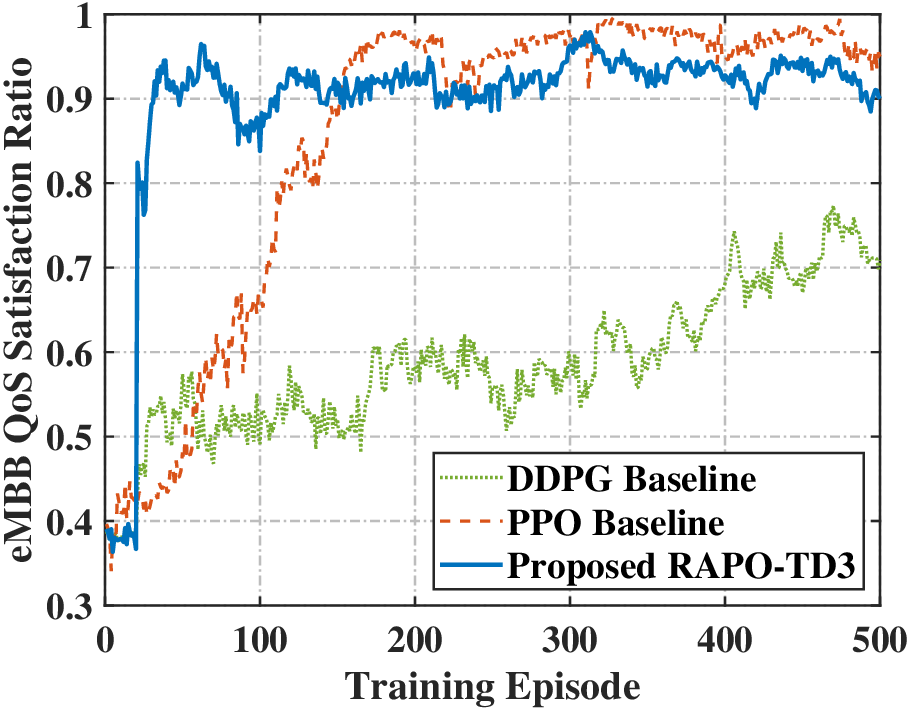}
    }%
    \subfigure[mMTC QoS Satisfaction Ratio]{
        \label{fig:resilience:d}
        \includegraphics[width=0.48\linewidth]{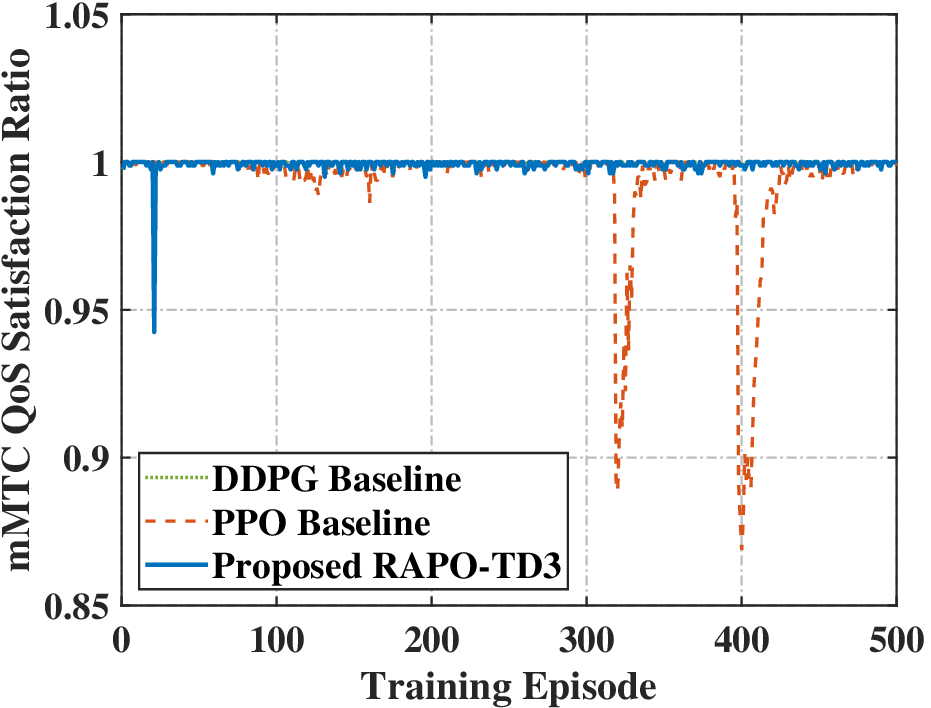}
    }
	\caption{Dynamic resilience and resource allocation performance of different DRL algorithms under the extreme URLLC traffic surge scenario: \subref{fig:resilience:a} URLLC QoS satisfaction ratio, \subref{fig:resilience:b} Constraint violation penalty, \subref{fig:resilience:c} eMBB QoS satisfaction ratio, and \subref{fig:resilience:d} mMTC QoS satisfaction ratio.}
    \label{fig:resilience}
\end{figure}

In sharp contrast, the proposed RAPO-TD3 framework armed with the RAPO mechanism learns a constraint-aware high-performing policy. As illustrated in Figs.~\ref{fig:resilience:a} and \ref{fig:resilience:b}, the proposed framework undergoes a transient penalty spike reaching 1.35 and a corresponding satisfaction dip to 0.85 near episode 25, reflecting initial structural adjustments under extreme environmental volatility. Following this brief exploratory adaptation, the system rapidly stabilizes. 

Driven by the dynamic weight vector $\boldsymbol{\omega}(t)$ generated by the adaptive smooth gating structure, the agent seamlessly scales down best-effort mMTC resource shares to accommodate the emergency surges. Beyond episode 50, the proposed RAPO-TD3 framework maintains a stable eMBB allocation above 0.93 while simultaneously shielding the latency-sensitive URLLC slice, keeping its satisfaction ratio tightly hovering at the near-optimal 1.0 boundary with near-zero system penalties. This resource allocation elasticity is comprehensively validated in Fig.~\ref{fig:resilience:d}, which tracks the dynamic trajectory of the mMTC QoS satisfaction ratio. The RAPO-TD3 framework suffers only an isolated transient dip to 0.942 at episode 22 during the initial training phase, after which it rapidly recalibrates and preserves the near-optimal 1.0 satisfaction boundary. Meanwhile, the unconstrained PPO baseline undergoes pronounced service drops decreasing sharply to 0.89 and 0.87 near episodes 320 and 400, and the DDPG baseline remains locked in a highly inefficient and conservative operational zone. Our framework is able to circumvent these suboptimalities, demonstrating how the RAPO online tuning loop effectively secures dual-layer isolation under extreme 6G backhaul anomalies.

\subsection{Ablation Study and Architectural Efficacy}
\label{sec:ablation_study}
To isolate and quantify the specific performance gains contributed by the twin-critic architecture and validate the necessity of mitigating overestimation bias in volatile aerial environments, we conduct an \textit{ablation study}. We introduce the 1Q-Optimized baseline, a structural variant that degrades the RAPO-TD3 framework by utilizing only a single critic network and removing the delayed policy smoothing mechanisms, while preserving identical state-action spaces and prioritized replay configurations. The comparative training dynamics over 500 episodes under the extreme traffic surge scenario are illustrated in Fig.~\ref{fig:ablation}.

A rigorous analysis of the total cumulative reward in Fig.~\ref{fig:ablation:a} demonstrates the superior learning efficiency of the proposed RAPO-TD3 framework. Armed with the twin-critic mechanism, the proposed framework achieves rapid convergence, reaching a high-utility stable plateau of approximately 2.95 within the first 50 episodes and maintaining consistent policy stability throughout the remaining training duration. Conversely, the 1Q-Optimized baseline exhibits a severely impeded learning trajectory, requiring nearly 350 episodes to approach a sub-optimal reward of 2.8. More critically, around episode 400, the 1Q-Optimized baseline undergoes a significant performance degradation, with its cumulative reward declining toward 2.65. This instability provides empirical evidence of unmitigated overestimation bias, where a single critic network consistently overvalues poor resource-slicing actions, leading to propagated gradient errors that eventually destabilize the actor policy.

The operational implications of this architectural deficiency are further illuminated by the system constraint violation penalty depicted in Fig.~\ref{fig:ablation:b}. Following the initial exploration phase, both the proposed framework and the 1Q-Optimized variant successfully suppress large-scale boundary breaches. However, a deeper examination of the steady-state volatility inset (spanning episodes 450 to 500) reveals distinct behavioral patterns. The proposed RAPO-TD3 framework exhibits minor boundary-tracking oscillations that peak boundedly near 0.08, whereas the 1Q-Optimized baseline maintains a lower penalty ripple. When evaluated jointly with the reward profiles from Fig.~\ref{fig:ablation:a}, this lower penalty profile indicates that the 1Q-Optimized agent has trapped itself within an overly conservative, suboptimal operational region that fails to fully exploit the available backhaul bandwidth capacity. In contrast, the proposed twin-critic design provides highly accurate value estimations, empowering the MSO to effectively adapt to and utilize the physical resource capacity frontier to maximize service utility while keeping localized SLA violations strictly constrained within a safe, negligible footprint.

\begin{figure}[!t]
    \centering
    \subfigure[Total Cumulative Reward]{
        \label{fig:ablation:a}
        \includegraphics[width=0.48\linewidth]{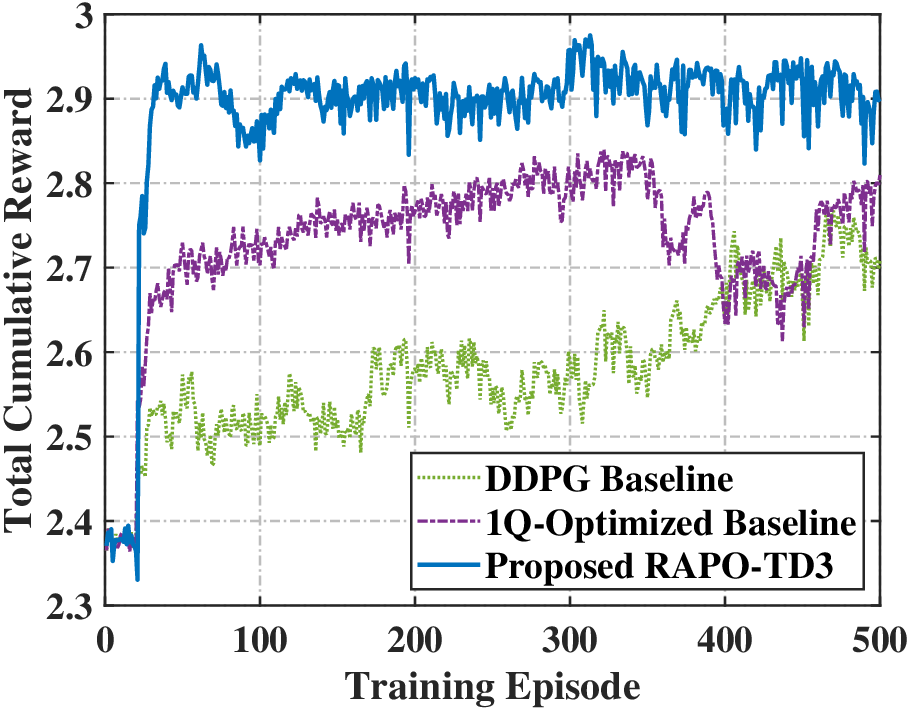}
    }%
    \subfigure[Constraint Violation Penalty]{
        \label{fig:ablation:b}
        \includegraphics[width=0.48\linewidth]{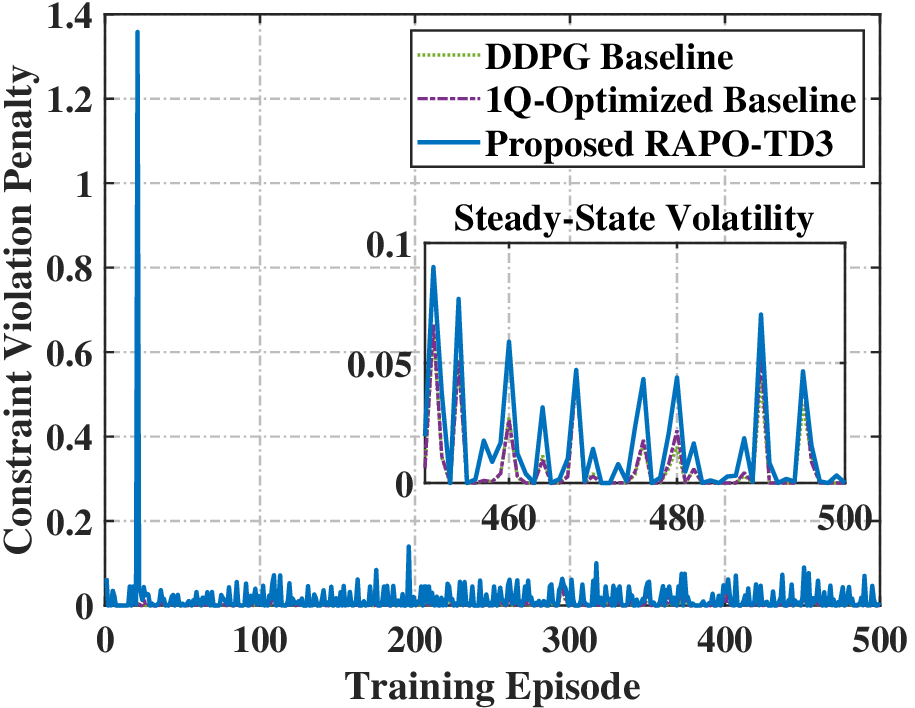}
    }
    \caption{Ablation study evaluating the structural efficacy of the twin-critic design under the extreme traffic scenario: \subref{fig:ablation:a} Total cumulative reward dynamics, and \subref{fig:ablation:b} Constraint violation penalty with a steady-state volatility inset.}
    \label{fig:ablation}
\end{figure}

\subsection{Scalability Analysis and Real-Time Execution Feasibility}

To evaluate the scalability and practical deployment feasibility of the proposed RAPO-TD3 framework, we conduct an additional experiment under varying network scales ($N \in \{2, 4, 6, 8\}$). The objective is to verify whether the computational efficiency and convergence stability can be maintained as the dense interference cluster expands.

As demonstrated in Table~\ref{tab:scalability}, the proposed architecture is highly scalable and yields two critical insights. First, while the framework maintains stable convergence across all scales, the steady-state average reward exhibits a marginal decline as $N$ increases. This phenomenon is fundamentally driven by physical-layer realities: a higher number of HAAs within a single frequency-reuse cluster intensifies co-channel interference, inherently degrading the overall Signal-to-Interference-plus-Noise Ratio (SINR) and tightening the feasible multi-objective allocation space. Second, the inference latency experiences a slight, strictly linear increase from $0.1260$ ms to $0.1442$ ms. This is attributed to the $\mathcal{O}(N)$ expansion of the state-action dimensionality, which linearly increases the floating-point operations in the neural network's input and output layers. 

Importantly, this computational overhead consumes only a negligible fraction of the stringent $1$ ms URLLC latency budget, leaving ample time for physical-layer transmission and queuing. These results theoretically and empirically suggest that the primary bottleneck in massive HAA deployments is physical interference rather than algorithmic scalability, indicating the stable applicability of our framework.

\section{Conclusion}
\label{sec:conclusion}

In this paper, we have proposed a stable dual-layer network slicing framework for HAA-assisted backhaul networks using a RAPO-TD3 deep reinforcement learning approach. To orchestrate heterogeneous service requirements under non-stationary traffic, we introduced a double soft-max projection mechanism and the RAPO scheme to guarantee physical constraint compliance and URLLC resilience. Furthermore, we presented a theoretical analysis showing that the proposed architecture satisfies the Robbins-Monro conditions and ensures Lipschitz continuity, supporting asymptotic convergence under the stated assumptions. Extensive simulations and scalability analyses under varying network dimensions demonstrate that our framework achieves better performance than existing baselines and achieves sub-millisecond inference latency. This computational efficiency consumes only a negligible fraction of the 1 ms URLLC latency budget while maintaining stable steady-state rewards against physical co-channel interference. Future work will extend this framework to accommodate unanchored small-scale UAVs by jointly optimizing 3D spatial trajectory mechanics, energy-aware hovering dynamics, and end-to-end fronthaul-backhaul slicing via decentralized federated learning paradigms for 6G coordination.

\begin{table}[!t]
\centering
\caption{Scalability Analysis of the Proposed Framework}
\label{tab:scalability}
\begin{tabular}{ccc}
\hline
HAAs ($N$) & Avg. Steady-State Reward & Inference Time (ms) \\
\hline
2 & 2.9495 & 0.1260 \\
4 & 2.8795 & 0.1326 \\
6 & 2.7922 & 0.1365 \\
8 & 2.5556 & 0.1442 \\
\hline
\end{tabular}
\end{table}


\color{black}




\bibliographystyle{IEEEtran}
\bibliography{IEEEabrv,reference}

@STRING{IEEE_J_ITS        = "{IEEE} Trans. Intell. Transp. Syst."}

@STRING{IEEE_J_VT         = "{IEEE} Trans. Veh. Technol."}

@STRING{IEEE_J_JSAC       = "{IEEE} J. Sel. Areas Commun."}

@STRING{IEEE_J_COM        = "{IEEE} Trans. Commun."}

@STRING{IEEE_J_WCOM       = "{IEEE} Trans. Wireless Commun."}

@STRING{IEEE_J_WCOML      = "{IEEE} Wireless Commun. Lett."}

@STRING{IEEE_J_GCN        = "{IEEE} Trans. Green Commun. Netw."}

@STRING{IEEE_J_IOT        = "{IEEE} Internet Things J."}

@STRING{IEEE_J_MC         = "{IEEE} Trans. Mobile Comput."}

@STRING{IEEE_J_NSE        = "{IEEE} Trans. Netw. Sci. Eng."}

@STRING{IEEE_J_NSM        = "{IEEE} Trans. Netw. Service Manag."}

@STRING{IEEE_J_CCN        = "{IEEE} Trans. on Cogn. Commun. Netw."}

@STRING{IEEE_J_PROC       = "Proc. {IEEE}"}

@STRING{IEEE_O_ACC        = "{IEEE} Access"}

@STRING{IEEE_M_COM        = "{IEEE} Commun. Mag."}

@STRING{IEEE_O_CSTO       = "{IEEE} Commun. Surveys Tuts."}

@STRING{IEEE_M_SP         = "{IEEE} Signal Process. Mag."}

@STRING{IEEE_M_VT         = "{IEEE} Veh. Technol. Mag."}

@STRING{IEEE_M_WC         = "{IEEE} Wireless Commun."}

@article{8685766,
  author  = {Zhang, Shunliang},
  journal = IEEE_M_WC,
  title   = {An Overview of Network Slicing for {5G}},
  year    = {2019},
  volume  = {26},
  number  = {3},
  month   = jun,
  pages   = {111-117},
  doi     = {10.1109/MWC.2019.1800234}
}

@article{Xiao2018,
  author  = {Xiao, Yong and Hirzallah, Mohammed and Krunz, Marwan},
  journal = IEEE_J_JSAC,
  title   = {Distributed Resource Allocation for Network Slicing Over Licensed and Unlicensed Bands},
  year    = {2018},
  volume  = {36},
  number  = {10},
  month   = oct,
  pages   = {2260-2274},
  doi     = {10.1109/JSAC.2018.2869964}
}

@article{9729992,
  author  = {Filali, Abderrahime and Mlika, Zoubeir and Cherkaoui, Soumaya and Kobbane, Abdellatif},
  journal = IEEE_J_NSE,
  title   = {Dynamic {SDN}-Based Radio Access Network Slicing With Deep Reinforcement Learning for {URLLC} and {eMBB} Services},
  year    = {2022},
  volume  = {9},
  number  = {4},
  month   = {July-Aug.},
  pages   = {2174-2187},
  doi     = {10.1109/TNSE.2022.3157274}
}

@article{10186380,
  author  = {Cao, Haotong and Kumar, Neeraj and Yang, Longxiang and Guizani, Mohsen and Yu, F. Richard},
  journal = IEEE_J_NSM,
  title   = {Resource Orchestration and Allocation of {E2E} Slices in Softwarized {UAVs}-Assisted {6G} Terrestrial Networks},
  year    = {2024},
  volume  = {21},
  number  = {1},
  month   = feb,
  pages   = {1032-1047},
  doi     = {10.1109/TNSM.2023.3296858}
}

@article{10003191,
  author  = {Peng, Haoran and Wang, Li-Chun and Jian, Zhuofu},
  journal = IEEE_J_CCN,
  title   = {Data-Driven Spectrum Partition for Multiplexing {URLLC} and {eMBB}},
  year    = {2023},
  volume  = {9},
  number  = {2},
  pages   = {386-397},
  month   = apr,
  doi     = {10.1109/TCCN.2022.3231690}
}

@article{10323247,
  author  = {Tian, Mengqiu and Li, Changle and Hui, Yilong and Cheng, Nan and Yue, Wenwei and Fu, Yuchuan and Han, Zhu},
  journal = IEEE_J_ITS,
  title   = {On-Demand Multiplexing of {eMBB}/{URLLC} Traffic in a Multi-{UAV} Relay Network},
  year    = {2024},
  volume  = {25},
  number  = {6},
  pages   = {6035-6048},
  month   = jun,
  doi     = {10.1109/TITS.2023.3332022}
}

@article{10354514,
  author  = {Taskou, Shiva Kazemi and Rasti, Mehdi and Hossain, Ekram},
  journal = IEEE_J_MC,
  title   = {End-to-End Resource Slicing for Coexistence of {eMBB} and {URLLC} Services in {5G}-Advanced/{6G} Networks},
  year    = {2024},
  volume  = {23},
  number  = {7},
  pages   = {8015-8032},
  month   = jul,
  doi     = {10.1109/TMC.2023.3341810}
}

@article{10742112,
  author  = {Tul Muntaha, Sidra and Hafeez, Maryam and Ahmed, Qasim Z. and Khan, Faheem A. and Zaharis, Zaharias D. and Lazaridis, Pavlos I.},
  journal = IEEE_J_CCN,
  title   = {{RAN} Slicing With Joint Resource Allocation for a Multi-Tenant–Multi-Service System},
  year    = {2025},
  volume  = {11},
  number  = {3},
  pages   = {1927-1939},
  month   = jun,
  doi     = {10.1109/TCCN.2024.3490781}
}

@inproceedings{9448942,
  author    = {Tominaga, Eduardo Noboro and Alves, Hirley and Souza, Richard Demo and Luiz Rebelatto, João and Latva-aho, Matti},
  booktitle = {IEEE 93rd Vehicular Technology Conference (VTC2021-Spring)},
  title     = {Non-Orthogonal Multiple Access and Network Slicing: Scalable Coexistence of {eMBB} and {URLLC}},
  year      = {2021},
  address   = {Helsinki, Finland},
  doi       = {10.1109/VTC2021-Spring51267.2021.9448942}
}

@article{9232929,
  author  = {Zhou, Fanqin and Yu, Peng and Feng, Lei and Qiu, Xuesong and Wang, Zhili and Meng, Luoming and Kadoch, Michel and Gong, Liang and Yao, Xianjiong},
  journal = IEEE_M_WC,
  title   = {Automatic Network Slicing for {IoT} in Smart City},
  year    = {2020},
  volume  = {27},
  number  = {6},
  pages   = {108-115},
  month   = dec,
  doi     = {10.1109/MWC.001.2000069}
}

@article{9975290,
  author  = {Coelho, André and Rodrigues, João and Fontes, Helder and Campos, Rui and Ricardo, Manuel},
  journal = IEEE_O_ACC,
  title   = {An Algorithm for Placing and Allocating Communications Resources Based on Slicing-Aware Flying Access and Backhaul Networks},
  year    = {2022},
  volume  = {10},
  number  = {},
  month   = dec,
  pages   = {128923-128942},
  doi     = {10.1109/ACCESS.2022.3227653}
}

@article{10473184,
  author  = {Guan, Yingying and Song, Qingyang and Chen, Tao and Qi, Weijing and Guo, Lei and Jamalipour, Abbas},
  journal = IEEE_M_VT,
  title   = {Slicing-Aware Aerial Networks for Integrated Sensing and Communication: {3D} Placement and Adaptive Allocation of Resources},
  year    = {2024},
  volume  = {19},
  number  = {2},
  pages   = {79-88},
  month   = jun,
  doi     = {10.1109/MVT.2024.3372509}
}

@article{10551400,
  author  = {Rafique, Wajid and Rani Barai, Joyeeta and Fapojuwo, Abraham O. and Krishnamurthy, Diwakar},
  journal = IEEE_O_CSTO,
  title   = {A Survey on Beyond {5G} Network Slicing for Smart Cities Applications},
  year    = {2025},
  volume  = {27},
  number  = {1},
  pages   = {595-628},
  month   = feb,
  doi     = {10.1109/COMST.2024.3410295}
}

@article{10104142,
  author  = {Liu, Yalin and Wang, Qiu and Dai, Hong-Ning and Fu, Yaru and Zhang, Ning and Lee, Chi Chung},
  journal = IEEE_J_VT,
  title   = {{UAV}-Assisted Wireless Backhaul Networks: Connectivity Analysis of Uplink Transmissions},
  year    = {2023},
  volume  = {72},
  number  = {9},
  pages   = {12195-12207},
  month   = sep,
  doi     = {10.1109/TVT.2023.3268025}
}

@article{s24061888,
  author         = {Chataut, Robin and Nankya, Mary and Akl, Robert},
  title          = {{6G} Networks and the {AI} Revolution—Exploring Technologies, Applications, and Emerging Challenges},
  journal        = {Sensors},
  volume         = {24},
  year           = {2024},
  number         = {6},
  article-number = {1888},
  month          = mar,
  doi            = {10.3390/s24061888}
}

@article{YU2023109533,
  title   = {Coordinated parallel resource allocation for integrated access and backhaul networks},
  journal = {Computer Networks},
  volume  = {222},
  pages   = {109533},
  year    = {2023},
  issn    = {1389-1286},
  month   = feb,
  doi     = {10.1016/j.comnet.2022.109533},
  author  = {Mengxin Yu and Yibo Pi and Aimin Tang and Xudong Wang}
}

@article{10716774,
  author  = {Ataeebojd, Elaheh and Rasti, Mehdi and Latva-Aho, Matti},
  journal = IEEE_J_GCN,
  title   = {Network Selection and Resource Allocation for Coexistence of {eMBB} and {URLLC} Services in a {6G} Multi-Band {HetNet}},
  year    = {2025},
  volume  = {9},
  number  = {3},
  pages   = {1179-1194},
  month   = sep,
  doi     = {10.1109/TGCN.2024.3481281}
}

@article{10537066,
  author  = {Tian, Mengqiu and Li, Changle and Hui, Yilong and Chen, Binbin and Yue, Wenwei and Fu, Yuchuan and Han, Zhu},
  journal = IEEE_J_WCOM,
  title   = {An Intelligent Coexistence Strategy for {eMBB}/{URLLC} Traffic in Multi-{UAV} Relay Networks via Deep Reinforcement Learning},
  year    = {2024},
  volume  = {23},
  number  = {10},
  pages   = {13424-13439},
  month   = oct,
  doi     = {10.1109/TWC.2024.3401163}
}

@article{9999295,
  author  = {Abouaomar, Amine and Taik, Afaf and Filali, Abderrahime and Cherkaoui, Soumaya},
  journal = IEEE_M_COM,
  title   = {Federated Deep Reinforcement Learning for Open {RAN} Slicing in {6G} Networks},
  year    = {2023},
  volume  = {61},
  number  = {2},
  pages   = {126-132},
  month   = feb,
  doi     = {10.1109/MCOM.007.2200555}
}

@article{8103164,
  author  = {Arulkumaran, Kai and Deisenroth, Marc Peter and Brundage, Miles and Bharath, Anil Anthony},
  journal = IEEE_M_SP,
  title   = {Deep Reinforcement Learning: A Brief Survey},
  year    = {2017},
  volume  = {34},
  number  = {6},
  pages   = {26-38},
  month   = nov,
  doi     = {10.1109/MSP.2017.2743240}
}

@article{LADOSZ20221,
  title   = {Exploration in deep reinforcement learning: A survey},
  journal = {Information Fusion},
  volume  = {85},
  pages   = {1-22},
  year    = {2022},
  issn    = {1566-2535},
  month   = sep,
  doi     = {10.1016/j.inffus.2022.03.003},
  author  = {Pawel Ladosz and Lilian Weng and Minwoo Kim and Hyondong Oh}
}

@article{10302364,
  author  = {Cai, Yue and Cheng, Peng and Chen, Zhuo and Ding, Ming and Vucetic, Branka and Li, Yonghui},
  journal = IEEE_J_MC,
  title   = {Deep Reinforcement Learning for Online Resource Allocation in Network Slicing},
  year    = {2024},
  volume  = {23},
  number  = {6},
  pages   = {7099-7116},
  month   = jun,
  doi     = {10.1109/TMC.2023.3328950}
}

@article{10750921,
  author  = {Chen, Geng and Mu, Xinzheng and Liang, Hongjia and Zeng, Qingtian and Zhang, Yu-Dong},
  journal = IEEE_J_WCOM,
  title   = {Distributed {RAN} Slicing Based on {MATD3} Joint With Evolutionary Game Assisted User Association for {MEC}-Enabled {HetNets}},
  year    = {2025},
  volume  = {24},
  number  = {1},
  pages   = {260-276},
  month   = jan,
  doi     = {10.1109/TWC.2024.3491359}
}

@article{s22083031,
  author         = {Hurtado Sánchez, Johanna Andrea and Casilimas, Katherine and Caicedo Rendon, Oscar Mauricio},
  title          = {Deep Reinforcement Learning for Resource Management on Network Slicing: A Survey},
  journal        = {Sensors},
  volume         = {22},
  year           = {2022},
  number         = {8},
  article-number = {3031},
  month          = apr,
  doi            = {10.3390/s22083031}
}

@article{9459763,
  author  = {Mei, Jie and Wang, Xianbin and Zheng, Kan and Boudreau, Gary and Sediq, Akram Bin and Abou-Zeid, Hatem},
  journal = IEEE_J_COM,
  title   = {Intelligent Radio Access Network Slicing for Service Provisioning in {6G}: A Hierarchical Deep Reinforcement Learning Approach},
  year    = {2021},
  volume  = {69},
  number  = {9},
  pages   = {6063-6078},
  month   = sep,
  doi     = {10.1109/TCOMM.2021.3090423}
}

@article{10026882,
  author  = {Kang, Hongyue and Chang, Xiaolin and Mišić, Jelena and Mišić, Vojislav B. and Fan, Junchao and Liu, Yating},
  journal = IEEE_J_IOT,
  title   = {Cooperative {UAV} Resource Allocation and Task Offloading in Hierarchical Aerial Computing Systems: A {MAPPO}-Based Approach},
  year    = {2023},
  volume  = {10},
  number  = {12},
  pages   = {10497-10509},
  month   = jun,
  doi     = {10.1109/JIOT.2023.3240173}
}

@article{9946428,
  author  = {Lai, Chuan-Chi and Bhola and Tsai, Ang-Hsun and Wang, Li-Chun},
  journal = IEEE_J_VT,
  title   = {Adaptive and Fair Deployment Approach to Balance Offload Traffic in Multi-{UAV} Cellular Networks},
  year    = {2023},
  volume  = {72},
  number  = {3},
  pages   = {3724-3738},
  month   = mar,
  doi     = {10.1109/TVT.2022.3221557}
}

@inproceedings{10257231,
  author    = {Bellone, Lorenzo and Galkin, Boris and Traversi, Emiliano and Natalizio, Enrico},
  booktitle = {The 19th International Conference on Distributed Computing in Smart Systems and the {Internet of Things} (DCOSS-IoT)},
  title     = {Deep Reinforcement Learning for Combined Coverage and Resource Allocation in {UAV}-Aided {RAN}-Slicing},
  year      = {2023},
  volume    = {},
  number    = {},
  address   = {Pafos, Cyprus},
  doi       = {10.1109/DCOSS-IoT58021.2023.00106}
}

@techreport{3gpp_tr_36_777,
  author      = {{3rd Generation Partnership Project (3GPP)}},
  title       = {{Study on Enhanced LTE Support for Aerial Vehicles}},
  institution = {3rd Generation Partnership Project (3GPP)},
  type        = {Tech. Rep.},
  number      = {36.777},
  year        = {2017}
}

@techreport{3gpp_tr_38_901,
  author      = {{3rd Generation Partnership Project (3GPP)}},
  title       = {Study on channel model for frequencies from 0.5 to 100 GHz (release 14)},
  institution = {3rd Generation Partnership Project (3GPP)},
  type        = {Tech. Rep.},
  number      = {38.901 V16.1.0},
  year        = {2017}
}

@inproceedings{PER_2016,
  author    = {Tom Schaul and
               John Quan and
               Ioannis Antonoglou and
               David Silver},
  title     = {Prioritized Experience Replay},
  booktitle = {The 4th International Conference on Learning Representations ({ICLR})},
  address   = {San Juan, Puerto Rico},
  year      = {2016}
}

@inproceedings{FujimotoHM18,
  author    = {Scott Fujimoto and
               Herke van Hoof and
               David Meger},
  title     = {Addressing Function Approximation Error in Actor-Critic Methods},
  booktitle = {The 35th International Conference on Machine Learning ({ICML})},
  address   = {Stockholmsm{\"{a}}ssan, Stockholm},
  year      = {2018}
}

@techreport{3gpp_ts_28_533,
  author      = {{3rd Generation Partnership Project (3GPP)}},
  title       = {Management and orchestration; Architecture framework},
  institution = {3rd Generation Partnership Project (3GPP)},
  type        = {Tech. Spec.},
  number      = {28.533},
  version     = {18.4.0},
  year        = {2024}
}

@article{al2014optimal,
  author  = {Al-Hourani, Akram and Kandeepan, Sithamparanathan and Lardner, Simon},
  journal = IEEE_J_WCOML,
  title   = {Optimal {LAP} Altitude for Maximum Coverage},
  year    = {2014},
  volume  = {3},
  number  = {6},
  pages   = {569-572},
  month   = dec,
  doi     = {10.1109/LWC.2014.2342736}
}

@article{khawaja2019survey,
  author  = {Khawaja, Wahab and Guvenc, Ismail and Matolak, David W. and Fiebig, Uwe-Carsten and Schneckenburger, Nicolas},
  journal = IEEE_O_CSTO,
  title   = {A Survey of Air-to-Ground Propagation Channel Modeling for Unmanned Aerial Vehicles},
  year    = {2019},
  volume  = {21},
  number  = {3},
  pages   = {2361-2391},
  month   = {thirdquarter},
  doi     = {10.1109/COMST.2019.2915069}
}

@article{Song2026,
  author  = {Song, Xin and Zhang, Biao and Fan, Ze and Li, Ruomeng and Xu, Siyang},
  journal = IEEE_J_NSM,
  title   = {Dynamic Normalization {TD3}-Based Task Offloading for {UAV}-Assisted Collaborative Computing},
  year    = {2026},
  volume  = {23},
  number  = {},
  pages   = {2762-2777},
  doi     = {10.1109/TNSM.2026.3667404}
}

@article{Ganjalizadeh2026,
  author  = {Ganjalizadeh, Milad and Ghadikolaei, Hossein S. and Gündüz, Deniz and Petrova, Marina},
  journal = IEEE_J_NSM,
  title   = {{BSAC-CoEx}: Coexistence of {URLLC} and Distributed Learning Services via Device Selection},
  year    = {2026},
  volume  = {23},
  number  = {},
  pages   = {1406-1421},
  doi     = {10.1109/TNSM.2025.3641848}
}

@article{Durisi2016,
  author  = {Durisi, Giuseppe and Koch, Tobias and Popovski, Petar},
  journal = IEEE_J_PROC,
  title   = {Toward Massive, Ultrareliable, and Low-Latency Wireless Communication With Short Packets},
  year    = {2016},
  volume  = {104},
  number  = {9},
  pages   = {1711-1726},
  month   = sep,
  doi     = {10.1109/JPROC.2016.2537298}
}

@article{She2018,
  author  = {She, Changyang and Yang, Chenyang and Quek, Tony Q. S.},
  journal = IEEE_J_WCOM,
  title   = {Cross-Layer Optimization for Ultra-Reliable and Low-Latency Radio Access Networks},
  year    = {2018},
  volume  = {17},
  number  = {1},
  pages   = {127-141},
  month   = jan,
  doi     = {10.1109/TWC.2017.2762684}
}

@book{Boyd2004,
  author    = {Boyd, Stephen and Vandenberghe, Lieven},
  title     = {Convex Optimization},
  publisher = {Cambridge University Press},
  year      = {2004}
}

@inproceedings{Silver2014,
  author    = {Silver, David and Lever, Guy and Heess, Nicolas and Degris, Thomas and Wierstra, Daan and Riedmiller, Martin},
  title     = {Deterministic policy gradient algorithms},
  year      = {2014},
  booktitle = {The 31st International Conference on Machine Learning ({ICML})},
  address   = {Beijing, China}
}

@article{Robbins1951,
  author  = {Robbins, Herbert and Monro, Sutton},
  journal = {The Annals of Mathematical Statistics},
  title   = {A Stochastic Approximation Method},
  year    = {1951},
  volume  = {22},
  number  = {3},
  pages   = {400-407}
}
\ifCLASSOPTIONcaptionsoff  \newpage \fi

\begin{IEEEbiography}[{\includegraphics[width=1in,height=1.25in,clip,keepaspectratio]{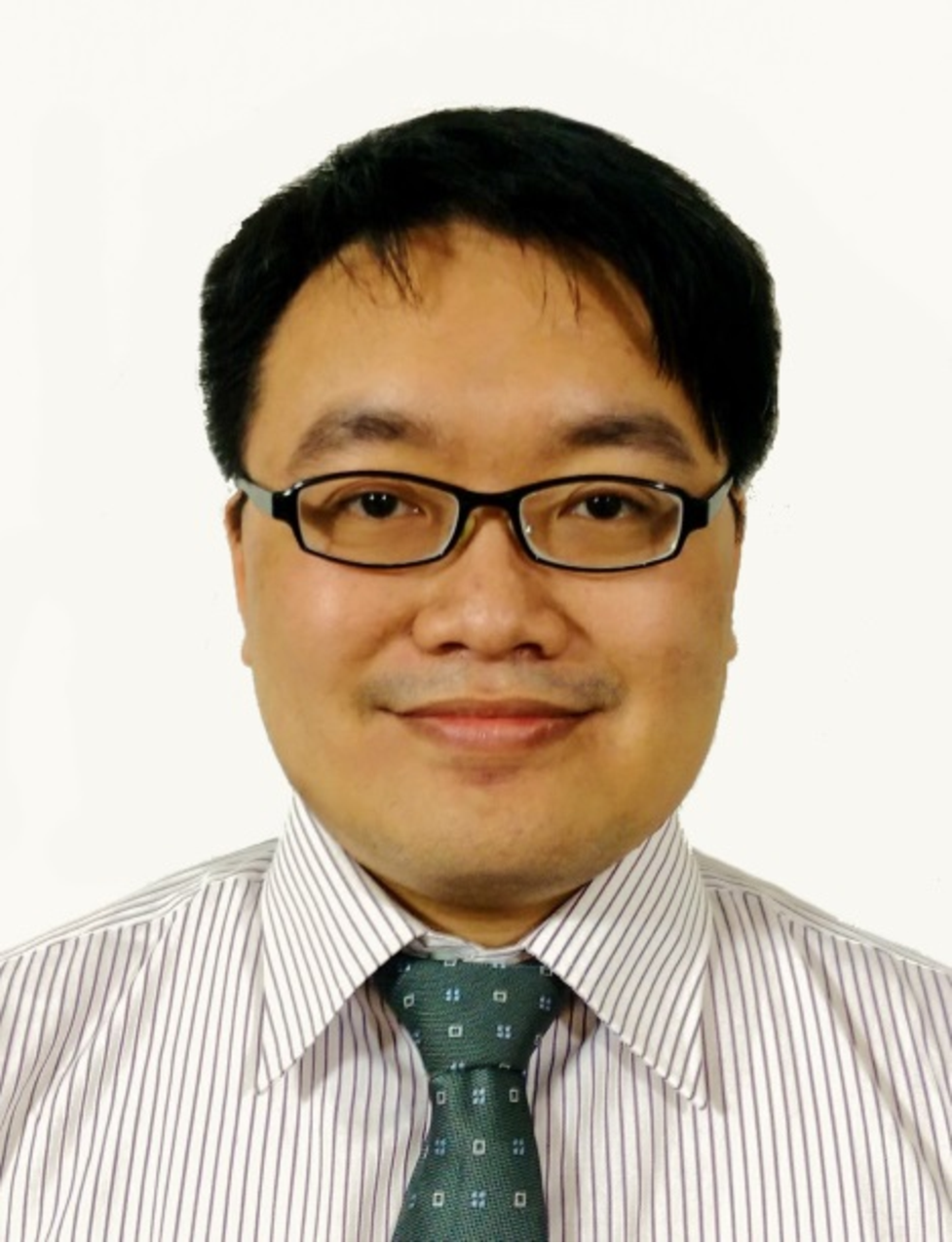}}]{Chuan-Chi Lai}
    (Member, IEEE) received the Ph.D. degree in Computer Science and Information Engineering from the National Taipei University of Technology, Taiwan, in 2017. He held research and faculty positions at National Chiao Tung University and Feng Chia University prior to his current role. Since 2024, he has been an Assistant Professor with the Department of Communications Engineering, National Chung Cheng University, Chiayi, Taiwan. His research interests include mobile edge computing, UAV networks, and AI for wireless communications. Dr. Lai was a recipient of the Postdoctoral Researcher Academic Research Award from the NSTC, Taiwan, in 2019, and Best Paper Awards at WOCC (2018, 2021) and ICUFN (2015).
\end{IEEEbiography}

\begin{IEEEbiography}[{\includegraphics[width=1in,height=1.25in,clip,keepaspectratio]{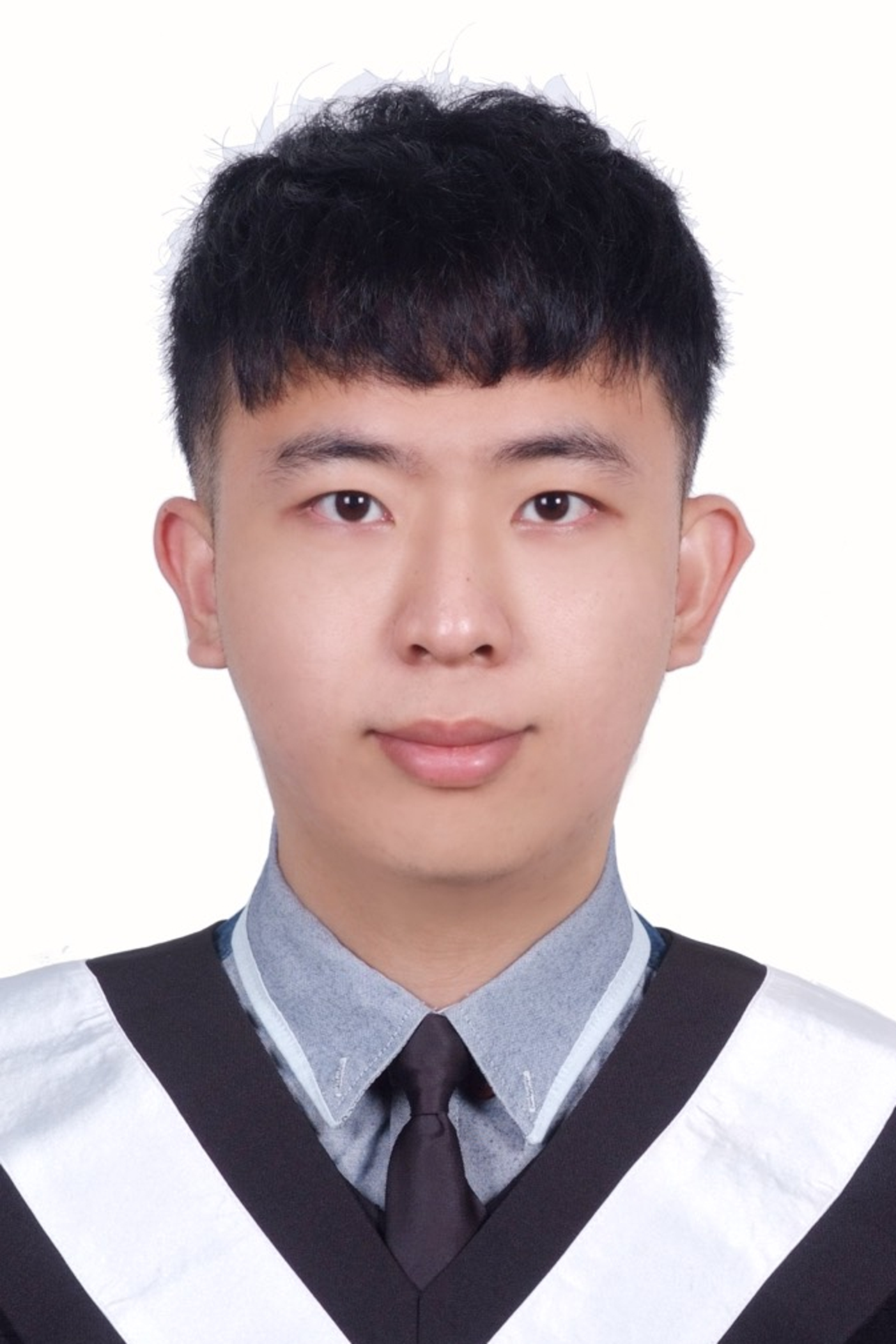}}]{Jen-Hsiang Li}
	received the bachelor's degree in Information Technology from Overseas Chinese University, Taichung, Taiwan, in 2023. In 2023, he joined the Applied Intelligent Communication and Computing Lab. Currently, he is pursuing his master's degree in Information Engineering and Computer Science at Feng Chia University, Taichung, Taiwan. His research interests include radio resource management, network slicing, and reinforcement learning. 
\end{IEEEbiography}

\end{document}